\def\ps@headings{%
\def\@oddhead{\mbox{}\scriptsize\rightmark \hfil \thepage}%
\def\@evenhead{\scriptsize\thepage \hfil \leftmark\mbox{}}%
\def\@oddfoot{}%
\def\@evenfoot{}}
\newtheorem{lemma}{Lemma}
\newtheorem{cor}{Corollary}
\newtheorem{thm}{Theorem}
\begin{document}
%
\title{Rateless Resilient Network Coding Against Byzantine Adversaries}


\author{\IEEEauthorblockN{Wentao Huang, Tracey Ho, Hongyi Yao}
\IEEEauthorblockA{California Institute of Technology, USA\\\{whuang, tho, hyao\}@caltech.edu}
\and
\IEEEauthorblockN{Sidharth Jaggi}
\IEEEauthorblockA{Chinese University of Hong Kong, Hong Kong\\jaggi@ie.cuhk.edu.hk  }
}


%


\maketitle

\begin{abstract}
This paper considers rateless network error correction
codes for reliable multicast in the presence of adversarial
errors. Most existing network error correction codes are designed
for a given network capacity and maximum number of errors
known a priori to the encoder and decoder. However, in certain
practical settings it may be necessary to operate without such
a priori knowledge. We present rateless coding schemes for two
adversarial models, where the source sends more redundancy
over time, until decoding succeeds. The first model assumes
there is a secret channel between the source and the destination
that the adversaries cannot overhear. The rate of the channel
is negligible compared to the main network. In the second
model, instead of a secret channel, the source and destination
share random secrets independent of the input information. The
amount of secret information required is negligible compared to
the amount of information sent. Both schemes are optimal in that
decoding succeeds with high probability when the total amount of
information received by the sink satisfies the cut set bound with
respect to the amount of message and error information. The
schemes are distributed, polynomial-time and end-to-end in that
other than the source and destination nodes, other intermediate
nodes carry out classical random linear network coding.
\end{abstract}


%
\IEEEpeerreviewmaketitle

\section{Introduction}
Network coding is a technique that allows the mixing of data at intermediate network nodes instead of simply relaying them. It has been shown theoretically that such codes are capable of achieving multicast network capacity and can be implemented in a distributed manner, as well as improving robustness against packet losses and link failures \cite{Ahlswede2000,Jaggi2005,Koetter2003,Ho2003}.
However, comparing with pure forwarding of packets, network coding is more vulnerable to attack by malicious adversaries that inject corrupted packets, since a single corrupted packet is mixed with other packets in the network.
The use of coding to correct such errors information theoretically was introduced by~\cite{Yeung2006,Cai2006}, and capacity-achieving network error correction codes have been proposed for various adversary and network models, e.g.~\cite{koetter08coding,Jaggi2008,Nutman2008}.  However, most existing schemes  assume a given min cut (capacity) of the network and maximum number of adversarial errors for the purposes of code design and encoding. But such an assumption may be overly restrictive in many practical settings. For example, in large peer-to-peer content distribution networks, estimating  network capacity is not easy, and the capacity is likely to change over time as users join and leave the network. Furthermore, it would be even more difficult to decide the number of malicious nodes and their strength. This issue becomes more serious if the source is multicasting to many destinations, where different destinations may require different code constructions to suit their own parameters.

This paper proposes rateless network error correction codes that do not require an a priori estimates of the network capacity and number of errors.
The source transmits redundancy incrementally until decoding succeeds. The supply of encoded packets is potentially limitless and the number of encoded packets actually transmitted is determined by the number of errors that occur. A number of related works e.g.~\cite{Krohn2004,Gkantsidis2006,Charles2006} propose cryptographic schemes that can be used to detect and remove errors in rateless network codes, while~\cite{Vyetrenko2009} proposes a rateless network error correction scheme that requires cryptographic means of verifying successful decoding.  In contrast, our work presents the first completely information-theoretic rateless network error correction codes.

We design two algorithms targeting different network models. In the first model, also studied in \cite{Jaggi2008}, there is a secret channel between the source and the destination that is hidden from the adversary (who is omniscient except for the secret), and the rate of the channel is negligible compared to the network. In this case over time we incrementally send more linearly dependent redundancy of the source message through the network to combat erasures, and incrementally send more (linearly independent) short hashes of the message on the secret channel to eliminate fake information. The destination amasses both kinds of redundancy until he decodes successfully. The code will adapt to the actual min cut of the network as well as the number of errors. 

The second scenario is the random secret model \cite{Nutman2008}, where  instead of a secret channel, the source and destination share a ``small'' fixed random secret that is independent of the input message. The amount of secrets required is again negligible compared to the amount of information sent. The random secret model may be more realistic than the secret channel model because it allows the source and destination to share their secrets in advance and use them for later communication over time. It is also possible for source and destination to share only a secret seed and generate pseudo random sequences with the seed \cite{Blum1982}. Compared to the secret channel model, the challenge is that both linearly dependent and independent redundancy must be sent over the public and unreliable network. Again, our code will adapt to the network and adversary parameters. 

Both schemes  are distributed with polynomial-time complexity of design and implementation. They assume no knowledge of the topology and work in both wired and wireless networks. Moreover, implementation involves only slightly modifying the source encoder and destination decoder, while internal nodes use standard random linear network coding.

\section{Network Models}\label{secmodel}
\subsection{Adversary Model}
The source Alice wishes to communicate reliably with the destination Bob over a general network, where there is a hidden attacker Calvin who wants to disrupt the communication. Calvin is assumed to be able to observe all the transmissions over the network, and know the encoding and decoding schemes at Alice, Bob, as well as all other intermediate nodes. He is also aware of the network topology. Calvin can corrupt transmitted packets or inject erroneous packets. Finally, we assume Calvin to be computationally unbounded, so information-theoretic security is required in this case.

However, we assume that Calvin's knowledge is limited in some aspects. Two limitation models are discussed in this paper. For the first model, in addition to the given network, we assume there is a secret channel between Alice and Bob, \emph{i.e.}, the information transmitted on this channel will not be observed or modified by Calvin \cite{Jaggi2008}. However, the rate of the channel is negligible compared to the network. In the second model, we assume the source and destination share a small amount of random secret information that is independent with the input information \cite{Nutman2008}. Again, the amount of secret information required is negligible compared to the amount of information sent. As we will show later, the differences between the two models and the respective code constructions are substantial.

\subsection{Network Model}
We model the network in the general case as a hypergraph where nodes are vertices and hyperedges
are directed from the transmitting nodes to the set of the receiving nodes \cite{Lun2005}. Let $\mathcal{E}$ be the set of hyperedges and $\mathcal{T}$ be the set of nodes. Alice and Bob are not assumed to have knowledge of the topology of the hypergraph. They also may not know the capacity of the network as well as the number of errors that the adversary can inject.

Source Alice encodes her information bits into a batch of $b$ packets by the encoding schemes described in subsequent sections. Each packet contains a sequence of $n+b$ symbols from the finite field $\mathbb{F}_q$. Let matrix $X_0=\mathbb{F}_q^{b \times (n+b)}$ be one batch of packets from Alice that is desired to be communicated to Bob. We call the successful communication of one batch of information bits $X_0$ to the destination a session. For clarity, we focus on one such session. In the rateless setting, because Alice does not know the network capacity and error patterns, a session may require multiple network transmissions until Bob receives enough redundancy to cope with errors and decode correctly. Assume in general that a session involves $N$ stages, \emph{i.e.}, $N$ uses of the network, where $N$ is a variable. During the $i$-th stage, $1 \le i \le N$, denote the capacity (min cut from Alice to Bob) of the network as $M_i$, and the number of errors (min cut from Calvin to Bob) that the adversary injects as $z_i$. We assume $z_i < M_i$, otherwise the network capacity is completely filled with errors and it is not possible to transmit anything. For any realistic network, $M_i$ is always bounded. For example, let $c_i$ be the number of transmission opportunities that occur to the source during the $i$-th stage, then $M_i \le c_i$. For convenience we further assume $c_i \le \bar{c},\ \forall i$.


\section{Code Construction for Secret Channel Model }
\subsection{Encoder}\label{secenc}
Alice's encoder has a structure similar to the secret channel model in \cite{Jaggi2008}, but operates in a rateless manner. In each session Alice transmits $nb$ incompressible information symbols from $\mathbb{F}_q$ to Bob. Alice arranges them into a matrix $W \in \mathbb{F}_q^{b \times n}$, and encodes $X_0 = (W\ I_b)$, where $I_b$ is the identity matrix of dimension $b$. Then, as in \cite{Ho2006,Ho2003}, Alice performs random linear combinations to the rows of $X_0$ to generate her transmitted packets. Specifically, Alice draws a random matrix $K_1 \in \mathbb{F}_q^{c_1 \times b}$ and encodes $X_1 = K_1 X_0$. Note that the redundant identity matrix receives the same linear transform so we can recover $X_0$. $X_1$ is then send over a network  where intermediate nodes implement random linear coding. In addition, Alice will hash the message and send it through the secret channel. She sets $\alpha_1 = b c_1 $, and draws random symbols $r_1, ..., r_{\alpha_1+1}$ independently and uniformly from $\mathbb{F}_q$. Note that the $\{r_j\}$ are drawn secretly so that Calvin cannot observe them. Let $D_1=[d_{kj}] \in \mathbb{F}_q^{(n+b) \times (\alpha_1+1)}$, where $d_{kj} = (r_j)^k$, and then the hash is computed as $H_1 = X_0 D_1$. Finally Alice sends $r_1, ..., r_{\alpha_1+1}$ and $H_1$ to Bob through the secret channel. The size of the secret is $(\alpha_1+1)(b+1)$, which is asymptotically negligible in $n$.

Alice then keeps sending more redundant information to Bob as follows.
For the $i$-th stage, $i \ge 2$, Alice draws a random matrix $K_i \in \mathbb{F}^{c_i \times b}_q$, encodes $X_i = K_i X_0$, and sends $X_i$ over the network. In addition, Alice again draws $r_1, ..., r_{\alpha_i}$ randomly from $\mathbb{F}_q$ secretly, where $\alpha_i=b c_i$. She then constructs $D_i = [d_{kj}] \in \mathbb{F}^{(n + b) \times \alpha_i}_q$, $d_{kj}=(r_j)^k$, and computes $H_i = X_0 D_i$. Alice eventually sends $r_1, ..., r_{\alpha_i}$ and $H_i$ to Bob through the secret channel. The size of the secret is $\alpha_i(b+1)$, again asymptotically negligible in $n$. Alice repeats this procedure until Bob indicates decoding success. If a success is indicated, Alice ends the current session and moves onto the next session.

\subsection{Decoder}
The network performs a classical distributed network code (which is shown to suffice to achieve capacity for multicast \cite{Ho2003}). Specifically, each packet transmitted by an intermediate node is a random linear combination of its incoming packets. For the $i$-th stage, we can describe this linear relation as
\begin{equation*}
  Y_i = [T_i \  Q_i] \left[ \begin{array}{c}
    X_i \\
    Z_i
  \end{array} \right]
,\end{equation*}
where $Y_i \in \mathbb{F}_q^{M_i \times (n+b)}$ is Bob's received observation, $Z_i \in \mathbb{F}_q^{z_i \times (n+b)}$ is the errors injected by Calvin, and $T_i$ and $Q_i$ are defined to be the transfer matrix from Alice to Bob and from Calvin to Bob, respectively.
By stacking all the batches of observations received by the $i$-th stage, let
\begin{equation*}
  Y^{(i)} = \left[ \begin{array}{c}
    Y_1\\
    \vdots\\
    Y_i
  \end{array} \right], \ \ \
  Z^{(i)} = \left[ \begin{array}{c}
    Z_1\\
    \vdots\\
    Z_i
  \end{array} \right],
\end{equation*}
\begin{equation*}
      H^{(i)} = \left[ H_1\  ...\  H_i \right],\ \ \  D^{(i)} = \left[D_1 ... D_i\right],
\end{equation*}
\begin{equation*}
    \hat{T}^{(i)} = \left[ \begin{array}{c|cccc}
    T_1 K_1& Q_1 & 0 & ... & 0\\
    T_2 K_2 & 0 & Q_2 & ... & 0\\
    \vdots & \vdots\\
    T_i K_i & 0 & 0 & ... & Q_i
  \end{array} \right] =
  \left[ T^{(i)}\ | \ Q^{(i)} \right].
\end{equation*}
Then we have
\begin{align}\label{first}
Y^{(i)} &= \left[ T^{(i)}\ \ Q^{(i)} \right] \left[\begin{array}{c}
  X_0\\
  Z^{(i)}
\end{array}\right],\\\label{sec}
  X_0 D^{(i)} &= H^{(i)},
\end{align}
where (\ref{first}) follows from the network transform, and (\ref{sec}) follows from the code construction. Notice that only $Y^{(i)}$, $D^{(i)}$ and $H^{(i)}$ are available to Bob, and he needs to recover $X_0$ from equations (\ref{first}), (\ref{sec}). Bob can accomplish this only if $X_0$ is in the row space of $Y^{(i)}$.

Suppose $X_0$ indeed lies in the row space of $Y^{(i)}$ (it happens with high probability for some $i$ as shown later), then there exists $X^s$ such that
\begin{equation}\label{x0}
  X_0= X^s Y^{(i)}.
\end{equation}
Therefore Bob only needs to find $X^s$, which may be achieved by solving
\begin{equation}\label{xs}
  X^s {Y}^{(i)} D^{(i)} = H^{(i)}.
\end{equation}
If (\ref{xs}) has a unique solution for $X^s$, Bob reconstructs $X_0$ according to (\ref{x0}) and feedbacks an acknowledgement of decoding success to Alice. If there exists no solution for (\ref{xs}), Bob waits for more redundancy to come. Otherwise, if there are multiple solutions for (\ref{xs}), Bob declares a decoding failure.
\subsection{Performance}
In the following we will show that the probability of error, including the events that Bob declares an error, or the events that $X_0$ is not in the row space of ${Y}^{(i)}$, or there exists some other $X' \ne X_0$ that satisfies (\ref{xs}), is vanishing as $q \to \infty$.

We first show that under proper conditions, $X_0$ is in the row span of $Y^{(i)}$. The following lemma is well-known \cite{Naylor94}:
\begin{lemma}\label{firstlem}
  If the linear transform $\hat{T}^{(i)}$ has full column rank, \emph{i.e.}, Rank($\hat{T}^{(i)}$)$=b+\sum_{j=1}^i z_j$, then it is left-invertible, and there exists $X^s$ such that $X^s Y^{(i)} = X_0$.
\end{lemma}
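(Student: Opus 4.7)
The plan is to exploit the factorization $Y^{(i)} = \hat{T}^{(i)}\,[X_0^{\top}\ (Z^{(i)})^{\top}]^{\top}$ established in equation (\ref{first}), combined with the standard fact from linear algebra that a matrix over a field has a left inverse if and only if it has full column rank (see \cite{Naylor94}). All of this is purely algebraic; no probabilistic argument is needed here, since the hypothesis on the rank of $\hat{T}^{(i)}$ is taken as given.

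First, I would invoke the hypothesis that $\mathrm{Rank}(\hat{T}^{(i)}) = b + \sum_{j=1}^{i} z_j$, which equals the number of columns of $\hat{T}^{(i)}$. By the well-known characterization of left-invertibility over a field, this guarantees the existence of a matrix $L \in \mathbb{F}_q^{(b+\sum_{j} z_j)\times \sum_j M_j}$ with $L\hat{T}^{(i)} = I$. Second, I would left-multiply the network transform equation (\ref{first}) by $L$ to obtain
\begin{equation*}
L Y^{(i)} \;=\; L\hat{T}^{(i)}\begin{bmatrix} X_0 \\ Z^{(i)} \end{bmatrix} \;=\; \begin{bmatrix} X_0 \\ Z^{(i)} \end{bmatrix}.
\end{equation*}
Third, I would simply extract the first $b$ rows: letting $X^s$ denote the submatrix formed by the top $b$ rows of $L$, it follows immediately that $X^s Y^{(i)} = X_0$, as claimed.

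There is no real obstacle in this lemma — it is essentially a restatement of the left-invertibility criterion applied to the specific factorization $Y^{(i)} = \hat{T}^{(i)}[X_0^{\top}\ (Z^{(i)})^{\top}]^{\top}$. The only subtlety worth noting explicitly in the write-up is that the particular $X^s$ produced this way is one witness to the claim, but it is not unique in general; any left inverse of $\hat{T}^{(i)}$ yields a valid $X^s$. Since subsequent steps of the overall argument (e.g.\ uniqueness of the solution of (\ref{xs})) will rely on further structural properties of $D^{(i)}$, this lemma need only assert existence, and the short argument above suffices.
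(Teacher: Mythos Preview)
Your proposal is correct and is exactly the standard argument one would give; the paper itself does not supply a proof for this lemma but simply labels it as well-known and cites \cite{Naylor94}. Your write-up, which unpacks the left-inverse characterization and applies it to the factorization in (\ref{first}) to extract $X^s$ as the top $b$ rows of a left inverse, is precisely the intended justification.
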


Now we show that $\hat{T}^{(i)}$ almost always has either full column rank or full row rank as $q \to \infty$.
\begin{lemma}\label{disjoint}
  If $b+\sum_{j=1}^i z_j \le \sum_{j=1}^i M_j$, then $\hat{T}^{(i)}$ has full column rank with high probability\footnote{Event $E$ happens with high probability ($w.h.p.$) if $\lim_{q \to \infty}\Pr \{E\}=1$.}.
\end{lemma}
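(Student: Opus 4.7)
The plan is to decompose $\hat{T}^{(i)}=[L\mid Q^{(i)}]$, where the left block $L$ vertically stacks the $T_jK_j$ (yielding $b$ columns) and the right block $Q^{(i)}=\mathrm{diag}(Q_1,\ldots,Q_i)$ is block diagonal with $\sum_{j=1}^i z_j$ columns. Full column rank of $\hat{T}^{(i)}$ is then equivalent to the $b$ columns of $L$ being linearly independent modulo the column span of $Q^{(i)}$.

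First I would condition on the high-probability event that random linear network coding makes each $T_j$ have row rank $M_j$ and each $Q_j$ have column rank $z_j$. By the main result of \cite{Ho2003} applied to the Alice-to-Bob and Calvin-to-Bob subnetworks separately, this holds with probability approaching $1$ as $q\to\infty$, since $M_j$ and $z_j$ are by definition the corresponding min-cuts. On this event $Q^{(i)}$ has column rank $\sum_j z_j$, and its column span $V\subset\mathbb{F}_q^{\sum_j M_j}$ is the direct sum $\bigoplus_j\mathrm{colspan}(Q_j)$, of dimension $\sum_j z_j$.

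Next, because $T_j:\mathbb{F}_q^{c_j}\to\mathbb{F}_q^{M_j}$ is surjective under the conditioning and the entries of $K_j$ are i.i.d.\ uniform and drawn independently of the network coefficients, each column of $T_jK_j$ is uniform on $\mathbb{F}_q^{M_j}$; joint independence across stages and across columns then makes the $b$ columns of $L$ i.i.d.\ uniform on $\mathbb{F}_q^{\sum_j M_j}$. Their images under the quotient map $\mathbb{F}_q^{\sum_j M_j}\to\mathbb{F}_q^{\sum_j M_j}/V$ are therefore i.i.d.\ uniform on a space of dimension $d=\sum_j M_j-\sum_j z_j$, which by hypothesis is at least $b$. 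A standard count gives linear independence with probability $\prod_{k=0}^{b-1}(1-q^{k-d})\ge (1-q^{-1})^{b}$, which tends to $1$ as $q\to\infty$.

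The main obstacle is the first step, particularly the claim that $Q_j$ attains rank $z_j$ despite an adversarial Calvin. The point is that $Q_j$ depends only on the routing and the random intermediate-node coefficients on paths from Calvin's injection points to Bob, not on Calvin's payload $Z_j$; since the min-cut from those locations to Bob is $z_j$, the standard RLNC argument applies verbatim. Once $T_j$ and $Q_j$ are controlled, everything else reduces to linear algebra on uniformly random vectors, and the two high-probability events combine via a union bound.
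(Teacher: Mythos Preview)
Your approach is correct but takes a somewhat different route from the paper. The paper establishes the stronger intermediate claim that the column spans of $T^{(i)}$ and $Q^{(i)}$ intersect only in $\{0\}$ with high probability, invoking a result of \cite{Ho2003al} on random linear network coding together with a union bound over Calvin's $\binom{i|\mathcal{E}|}{\sum_j z_j}$ possible injection patterns; combined with full column rank of each block this yields full column rank of $\hat T^{(i)}$. You instead condition on the network, fix $V=\mathrm{colspan}(Q^{(i)})$, and use only the fresh randomness in Alice's matrices $K_j$ to argue that the $b$ columns of $T^{(i)}$ are i.i.d.\ uniform in $\mathbb{F}_q^{\sum_j M_j}$ and hence linearly independent modulo $V$ by a direct count. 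Your argument is more self-contained---it avoids the cited column-span-disjointness lemma---and proves exactly what Lemma~\ref{disjoint} asserts; the paper's stronger disjointness conclusion, by contrast, is reused later in the paper (e.g., in Lemma~\ref{tfullcol} and in separating the $\hat{\bar T}^{(i)}$ and $\bar Q^{(i)}$ contributions when analyzing the short packets). One point you should make explicit: since Calvin observes all transmissions, his injection locations---and therefore $Q^{(i)}$ and the subspace $V$---could in principle be chosen as a function of the $K_j$, which would break the independence your quotient argument relies on; the fix is exactly the union bound over adversary patterns that the paper invokes, after which your conditioning argument applies verbatim for each fixed pattern.
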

\begin{proof}
The proof follows an idea similar to \cite{Jaggi2008} with the difference that we consider communications over multiple stages.
  Notice that $\hat{T}^{(i)} =\left[ T^{(i)}\ \ Q^{(i)} \right]$. Since $b+\sum_{j=1}^i z_j \le \sum_{j=1}^i M_j$,  it follows $b \le \sum_{j=1}^i M_j$. Because $T_j$ has full rank and $K_j$ are random matrices, if $b \le \sum_{j=1}^i M_j$, then the probability that the columns of $T^{(i)}$ are linearly dependent is upper bounded by $b/q \to 0$ by the Schwartz-Zippel lemma. So $T^{(i)}$ has full column rank. Without loss of generality we assume $Q^{(i)}$ also has full column rank, otherwise we can select a basis of the column space of $Q^{(i)}$ and reformulate the problem with a reduced $z_i$. Furthermore, by \cite{Ho2003al}, if $b+\sum_{j=1}^i z_j \le \sum_{j=1}^i M_j$, the probability that the column spans of $T^{(i)}$ and $Q^{(i)}$ intersects anywhere other than in the zero vector is upper bounded by $i^2|\mathcal{T}||\mathcal{E}|q^{-1}$ for a fix adversary pattern. Since Calvin can choose his locations in at most $i|\mathcal{E}| \choose \sum_j z_j$ ways, by the union bound, the probability that $T^{(i)}$ and $Q^{(i)}$ intersects is bounded by ${i|\mathcal{E}| \choose \sum_j z_j} i^2|\mathcal{T}||\mathcal{E}|q^{-1} \to 0$. Hence $\hat{T}^{(i)}$ has full column rank with high probability.
\end{proof}

\begin{cor}\label{xse}
  If $b+\sum_{j=1}^{i} z_j \le \sum_{j=1}^{i} M_j$, then with high probability there exists a $X^s$ such that $X^s {Y}^{(i)} = X_0$.
\end{cor}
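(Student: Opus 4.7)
The plan is to derive this corollary as an immediate consequence of the preceding two lemmas, by chaining them together.

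First, I would invoke Lemma \ref{disjoint}: the hypothesis $b+\sum_{j=1}^{i} z_j \le \sum_{j=1}^{i} M_j$ is exactly the condition required by that lemma, so we obtain with high probability that $\hat{T}^{(i)}$ has full column rank.

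Next, I would apply Lemma \ref{firstlem} to this event. Since $\hat{T}^{(i)}$ has full column rank, it admits a left inverse $L$ with $L\hat{T}^{(i)} = I_{b + \sum_{j=1}^i z_j}$. Multiplying equation (\ref{first}) from the left by $L$ yields
\begin{equation*}
L Y^{(i)} = \left[\begin{array}{c} X_0 \\ Z^{(i)} \end{array}\right],
\end{equation*}
so the top $b$ rows of $L$ provide a matrix $X^s \in \mathbb{F}_q^{b \times \sum_{j=1}^i M_j}$ satisfying $X^s Y^{(i)} = X_0$.

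Combining these two steps, the existence of $X^s$ holds on an event of probability at least that guaranteed by Lemma \ref{disjoint}, which tends to $1$ as $q \to \infty$. There is no real obstacle here; the work was already done in Lemma \ref{disjoint}, whose rank argument relies on the Schwartz--Zippel lemma and the union bound over adversary configurations, and in the elementary linear-algebra observation of Lemma \ref{firstlem}. The corollary is essentially a convenient restatement packaging these facts for use in subsequent decoding arguments (specifically to justify that equation (\ref{x0}) admits a solution, upon which the analysis of (\ref{xs}) can proceed).
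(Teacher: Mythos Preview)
Your proposal is correct and matches the paper's intent exactly: the corollary is stated in the paper without proof, as an immediate consequence of Lemma~\ref{disjoint} followed by Lemma~\ref{firstlem}. Your explicit unpacking of the left-inverse step is slightly more detailed than what the paper provides, but the logic is identical.
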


Next we need to show the solution is unique, \emph{i.e.}, the hash is strong enough so that Bob can distill the injected error.
\begin{lemma}\label{hash}
  For any $X' \ne X_0$, the probability that $X' D^{(i)} = H^{(i)}$ is bounded from above by $((n+b)/q)^{\sum_{k=1}^i \alpha_k+1}$.
\end{lemma}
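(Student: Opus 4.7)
The plan is to reduce the statement to a Schwartz--Zippel bound applied independently to each column of $D^{(i)}$. Since $H^{(i)}=X_0 D^{(i)}$ by construction, the event $X'D^{(i)} = H^{(i)}$ is equivalent to $\Delta D^{(i)} = 0$, where $\Delta := X'-X_0 \in \mathbb{F}_q^{b\times(n+b)}$ is a fixed nonzero matrix. Because $\Delta\ne 0$, some row $\Delta_{r^\ast}=(\delta_1,\ldots,\delta_{n+b})$ is nonzero; it suffices to upper bound $\Pr[\Delta_{r^\ast} D^{(i)} = 0]$, since this is a necessary condition for $\Delta D^{(i)}=0$.

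Next I would exploit the structure $d_{kj}=(r_j)^k$ of the columns of each $D_\ell$. For the $j$-th column, the corresponding entry of $\Delta_{r^\ast} D^{(i)}$ is
\begin{equation*}
P_j(r_j) \;=\; \sum_{k=1}^{n+b} \delta_k\, (r_j)^k,
\end{equation*}
which is a univariate polynomial in $r_j$ of degree at most $n+b$ and, crucially, not the zero polynomial (since some $\delta_k\ne 0$ and the monomials $r_j^1,\ldots,r_j^{n+b}$ are linearly independent over $\mathbb{F}_q$). By the Schwartz--Zippel lemma, $\Pr_{r_j}[P_j(r_j)=0] \le (n+b)/q$.

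The last step is to use independence across columns. The random symbols $r_j$ used to build the columns of $D_1,\ldots,D_i$ are drawn independently and uniformly over $\mathbb{F}_q$, and the total number of such symbols (hence columns of $D^{(i)}$) is $(\alpha_1+1)+\alpha_2+\cdots+\alpha_i=\sum_{k=1}^i\alpha_k+1$. Since $\Delta$ is fixed (independent of the $r_j$'s), the events $\{P_j(r_j)=0\}$ are mutually independent, so
\begin{equation*}
\Pr[\Delta D^{(i)} = 0] \;\le\; \Pr\bigl[\,\forall j:\, P_j(r_j)=0\,\bigr] \;\le\; \left(\frac{n+b}{q}\right)^{\sum_{k=1}^i\alpha_k+1},
\end{equation*}
which is the claimed bound. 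The only nontrivial point to watch is verifying that $P_j$ is a nonzero polynomial (handled by the Vandermonde-style structure of $D^{(i)}$) and that the $r_j$'s are genuinely independent across stages and columns, which follows from the encoder's description; everything else is a direct Schwartz--Zippel plus independence argument.
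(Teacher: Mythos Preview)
Your proof is correct and follows essentially the same approach as the paper: reduce to $(X'-X_0)D^{(i)}=0$, pick a nonzero row, view each entry as a nonzero univariate polynomial of degree at most $n+b$ in the corresponding $r_j$, bound each root event by $(n+b)/q$, and multiply using independence across the $\sum_{k=1}^i\alpha_k+1$ columns. The only cosmetic difference is that you cite Schwartz--Zippel while the paper states the univariate root-count bound directly.
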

\begin{proof}
     It is equivalent to consider the probability that $(X'-X_0) D^{(i)} = 0$. Since $X' - X_0 \ne 0$, there is at least one row in which $X'$ differs from $X_0$. Denote this row of $X'-X_0$ as $(x_1, ..., x_{n+b})$, then the $j$-th entries of the corresponding row of $(X'-X_0)D^{(i)}$ is $F(r_j)=\sum_{k=1}^{n+b}x_k r_j^k$. Because $F(r_j)$ is not the zero polynomial, the probability (over $r_j$) that $F(r_j)=0$ is at most $(n+b)/q$.  Because $D^{(i)}$ has $\sum_{k=1}^i \alpha_k+1$ columns, and all $r_j$, $1 \le j \le \sum_{k=1}^i \alpha_k+1$, are independently chosen, the probability that the entire row is a zero vector is at most $((n+b)/q)^{\sum_{k=1}^i \alpha_k+1}$. This is an upper bound of the probability that the entire matrix of $(X'-X_0) D^{(i)}$ is zero.
\end{proof}
\begin{lemma}\label{vsne}
  The probability that there exists ${V^s} \ne X^s$ such that $V^s {Y}^{(i)} \ne X_0$ but $V^s {Y}^{(i)} D^{(i)} = H^{(i)}$ is upper bounded by $(n+b)^{\sum_{k=1}^i \alpha_k+1}/q \to 0$.
\end{lemma}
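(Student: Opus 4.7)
The plan is to reduce this to Lemma~\ref{hash} by a union bound over all possible ``alternative decodings'' $X' = V^s Y^{(i)}$. First I would observe that the event in question is exactly: there exists $X' \neq X_0$ lying in the row span of $Y^{(i)}$ such that $X' D^{(i)} = H^{(i)}$ (this is equivalent to the stated event because $H^{(i)} = X_0 D^{(i)}$ and the mapping $V^s \mapsto V^s Y^{(i)}$ takes values in the row span of $Y^{(i)}$). For any single fixed $X' \neq X_0$, Lemma~\ref{hash} already gives $\Pr[X' D^{(i)} = H^{(i)}] \le ((n+b)/q)^{\sum_k \alpha_k + 1}$, where the probability is over the secret random symbols $\{r_j\}$ which are independent of $Y^{(i)}$.

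Next I would count the candidate $X'$'s. Since each row of $X' = V^s Y^{(i)}$ lies in the row span of $Y^{(i)}$, the number of distinct $X'$ is at most $q^{b\,\mathrm{rank}(Y^{(i)})}$. From the structure of $\hat{T}^{(i)}$ in equation~(\ref{first}), $\mathrm{rank}(Y^{(i)}) \le b + \sum_{j=1}^i z_j$, and in the regime where decoding can succeed (Corollary~\ref{xse}) this is bounded by $\sum_j M_j \le \sum_j c_j$. Since Alice sets $\alpha_k = b c_k$, this chain of inequalities yields the crucial estimate
\begin{equation*}
b \cdot \mathrm{rank}(Y^{(i)}) \;\le\; b \sum_{j=1}^i c_j \;=\; \sum_{k=1}^i \alpha_k.
\end{equation*}

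Applying the union bound then gives
\begin{equation*}
\Pr[\text{bad event}] \;\le\; q^{b\,\mathrm{rank}(Y^{(i)})} \left(\frac{n+b}{q}\right)^{\sum_k \alpha_k + 1} \;\le\; \frac{(n+b)^{\sum_k \alpha_k + 1}}{q},
\end{equation*}
which is exactly the claimed bound and tends to $0$ as $q \to \infty$ with the other parameters fixed.

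The only delicate point is making sure that the hash length (in particular, the extra ``$+1$'' in $\alpha_1 + 1$) provides a spare factor of $1/q$ that strictly beats the $q^{b\,\mathrm{rank}(Y^{(i)})}$ blow-up from the union bound. This is precisely what Alice's choice $\alpha_k = b c_k$, together with the one extra hash symbol in the first stage, achieves. Aside from this accounting, the argument is essentially a Schwartz--Zippel / union bound combination and I do not expect any real obstacle.
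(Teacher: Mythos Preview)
Your proof is correct and follows essentially the same union-bound-over-Lemma~\ref{hash} argument as the paper; the paper simply counts the $q^{b\sum_j M_j}$ possible matrices $V^s$ directly (using $M_j \le c_j$ so that $b\sum_j M_j \le \sum_k \alpha_k$) rather than passing to $X' = V^s Y^{(i)}$ and bounding $\mathrm{rank}(Y^{(i)})$. One small remark: you invoke the hypothesis of Corollary~\ref{xse} to get $\mathrm{rank}(Y^{(i)}) \le b+\sum_j z_j \le \sum_j M_j$, but this detour is unnecessary since $Y^{(i)}$ has $\sum_j M_j$ rows and hence $\mathrm{rank}(Y^{(i)}) \le \sum_j M_j \le \sum_j c_j$ holds unconditionally, matching the paper's unconditional statement of the lemma.
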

\begin{proof}
  Note that the dimension of $V^s$ is $$b  \sum_{j=1}^{i} M_j \le b  \sum_{j=1}^i c_j$$ over $\mathbb{F}_q$. So by invoking Lemma \ref{hash} and then take the union bound over all possible choices of $V^s$, the claim follows.
\end{proof}

Now we are ready to present the main result for the shared secret model.
\begin{thm}\label{thm1}
  $\forall i$ such that $b+\sum_{j=1}^{i} z_j \le \sum_{j=1}^{i} M_j$, with the proposed coding scheme, Bob is able to decode $X_0$ correctly with high probability at the $i$-th stage. Otherwise, Bob waits for more redundancy instead of decoding erroneous packets.
\end{thm}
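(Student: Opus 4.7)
The plan is to combine Corollary~\ref{xse} and Lemma~\ref{vsne} via a union bound, since those two lemmas have already done the real work. First, I invoke Corollary~\ref{xse} to conclude that, under the hypothesis $b+\sum_{j=1}^{i} z_j \le \sum_{j=1}^{i} M_j$, with high probability there exists some $X^s$ satisfying $X^s Y^{(i)} = X_0$. Because of the hash identity (\ref{sec}), this $X^s$ is automatically a solution of the decoding system (\ref{xs}), so (\ref{xs}) is consistent with high probability.

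Next, I invoke Lemma~\ref{vsne}: with high probability no $V^s$ satisfying (\ref{xs}) can produce $V^s Y^{(i)} \neq X_0$. In other words, every solution of (\ref{xs}) yields the same decoded message $V^s Y^{(i)} = X_0$, which Bob can then read off via (\ref{x0}). Taking a union bound over the two failure events, namely the row-space failure controlled by Corollary~\ref{xse} and the hash-collision failure controlled by Lemma~\ref{vsne}, bounds the total error probability (including Bob erroneously declaring failure, and Bob outputting some $X' \ne X_0$) by a sum of terms that vanish as $q \to \infty$.

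The main point requiring care is the uniqueness clause in the decoder: the system (\ref{xs}) need not have a unique matrix solution $X^s$, because $Y^{(i)}D^{(i)}$ generally has rank at most $b+\sum_j z_j$, which may be strictly less than $\sum_j M_j$. Lemma~\ref{vsne} is what rescues us here, since it ensures that any ambiguity among feasible $X^s$ lies entirely in the left null space of $Y^{(i)}$, and hence the recovered product $X^s Y^{(i)} = X_0$ is unambiguous even when $X^s$ itself is not. I expect that stating this observation cleanly, rather than any new calculation, will be the only nontrivial step.
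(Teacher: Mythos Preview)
Your proposal is correct and follows essentially the same route as the paper: invoke Corollary~\ref{xse} for existence of a consistent $X^s$, invoke Lemma~\ref{vsne} for correctness of any solution, and combine by a union bound. The paper's own proof is just these two sentences.

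Your extra paragraph on the uniqueness clause is a genuine refinement over the paper. As the decoder is literally written, Bob declares failure whenever (\ref{xs}) has multiple matrix solutions $X^s$; but since $\mathrm{rank}(Y^{(i)}) \le b+\sum_j z_j$, whenever the hypothesis holds with strict inequality the rows of $Y^{(i)}$ are dependent and $X^s$ is never unique. The paper's proof simply asserts ``the solution is correct and unique'' without addressing this. Your observation that Lemma~\ref{vsne} forces all feasible $X^s$ to agree on $X^s Y^{(i)}$ (i.e.\ the ambiguity lies in the left null space of $Y^{(i)}$) is exactly the right fix, and amounts to reading ``unique solution'' in the decoder as ``unique decoded message $X^s Y^{(i)}$'' rather than ``unique $X^s$''. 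Equivalently one could have Bob first discard redundant rows of $Y^{(i)}$, as the paper does explicitly in its second construction; either way your point stands and improves on the paper's argument.
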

\begin{proof}
  By Corollary \ref{xse}, we can solve $X_0$ from (\ref{x0}) and (\ref{xs}) if $b+\sum_{j=1}^{i} z_j \le \sum_{j=1}^{i} M_j$. By Lemma \ref{vsne}, if a solution exists, it is correct and unique. Otherwise, there is no solution to (\ref{xs}) and by the algorithm Bob waits for more redundancy.
\end{proof}

Theorem \ref{thm1} shows the code is optimal in that decoding succeeds with high probability whenever the total amount of information received by the sink satisfies the cut set bound, \emph{i.e.}, $b+\sum_{j=1}^{i} z_j \le \sum_{j=1}^{i} M_j.$ If the bound is not satisfied, then  it is not possible for Bob to decode correctly under any coding scheme. 
 The following result shows that the code is rate optimal if the network capacity and number of errors are  i.i.d. across stages.\begin{thm}\label{rate} Assume $M_i$, $z_i$, $i=1,2...$ are i.i.d. random variables with mean $\mathbb{E}[M]$ and $\mathbb{E}[z]$, respectively.
 If there exists $\epsilon >0$ such that $\mathbb{E}[M]-\mathbb{E}[z] \ge \epsilon$, then with the proposed coding scheme, Bob is able to decode $X_0$ correctly with high probability in a finite number of stages. Further, on average the code achieves rate $$r \ge \frac{b}{b+\bar{c}-1} \left(\mathbb{E}[M] - \mathbb{E}[z]\right).$$
\end{thm}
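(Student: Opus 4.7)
The plan is to cast the decoding criterion of Theorem \ref{thm1} as a one-dimensional random-walk first-passage problem and then apply Wald's identity. Define $S_i := \sum_{j=1}^{i}(M_j - z_j)$ and let $N := \min\{i \ge 1 : S_i \ge b\}$, so that by Theorem \ref{thm1}, stage $N$ is the first stage at which Bob decodes correctly with high probability. Because the network model assumes integer-valued capacities with $z_j < M_j$, each increment $M_j - z_j$ is at least $1$, hence $S_i \ge i$ pointwise and $N \le b$ almost surely. This already establishes the first assertion: after at most $b$ stages the cut-set condition of Theorem \ref{thm1} is met, and Bob decodes correctly with high probability as $q \to \infty$.

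For the rate bound, observe that each session delivers $b$ information packets (the matrix $W$ has $b$ rows) using $N$ network stages on average, so the long-run rate in packets per stage equals $b/\mathbb{E}[N]$. It therefore suffices to show
\begin{equation*}
\mathbb{E}[N] \;\le\; \frac{b + \bar{c} - 1}{\mathbb{E}[M] - \mathbb{E}[z]}.
\end{equation*}
Since $\{M_j - z_j\}_{j \ge 1}$ is i.i.d.\ with common mean $\mathbb{E}[M] - \mathbb{E}[z] \ge \epsilon > 0$ and $N \le b$ is a bounded stopping time, Wald's identity gives $\mathbb{E}[S_N] = \mathbb{E}[N](\mathbb{E}[M] - \mathbb{E}[z])$. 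For the overshoot, the definition of $N$ forces $S_{N-1} \le b-1$, while $M_N - z_N \le M_N \le c_N \le \bar{c}$, so $S_N \le b + \bar{c} - 1$ almost surely. Taking expectations and rearranging yields the displayed bound on $\mathbb{E}[N]$, and hence $r = b/\mathbb{E}[N] \ge b(\mathbb{E}[M] - \mathbb{E}[z])/(b + \bar{c} - 1)$.

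The main technical piece is the overshoot estimate: one needs the a priori cap $M_i \le c_i \le \bar{c}$ from the network model to bound the jump $M_N - z_N$ at the stopping time, and the integer-valued strict inequality $z_i < M_i$ to tighten $S_{N-1} < b$ to $S_{N-1} \le b - 1$. Once these two ingredients are in place, the remainder is a routine application of Wald's identity for a uniformly bounded stopping time, and no extended optional-stopping machinery is required.
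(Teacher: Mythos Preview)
Your argument is correct and the second half---Wald's identity together with the overshoot bound $S_N \le (b-1)+\bar c$---is exactly what the paper does. The one genuine difference is the finiteness step. The paper does \emph{not} exploit the standing assumption $z_i < M_i$; instead it treats $\Delta_j = M_j - z_j$ as an i.i.d.\ sequence with positive mean and invokes Chebyshev's inequality to show $\Pr\{S_N \ge b\} \to 1$, concluding only that $N$ is finite with high probability. Your observation that $M_j - z_j \ge 1$ (integers plus strict inequality) gives the sharper deterministic statement $N \le b$, which simultaneously dispenses with Chebyshev and trivially verifies the integrability hypothesis of Wald. So your route is more elementary under the paper's own model assumptions; the paper's Chebyshev argument, on the other hand, would survive if one relaxed to $z_i \le M_i$ or to real-valued increments with merely positive mean drift. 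One small point: the paper defines the rate as $r = \mathbb{E}[b/N]$ and inserts a Jensen step to reach $r \ge b/\mathbb{E}[N]$, whereas you equate $r$ directly with $b/\mathbb{E}[N]$ via a renewal-reward reading of ``long-run rate.'' Either interpretation yields the stated bound, but you may want to flag which one you are using.
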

\begin{proof}
  Let $L = b/\epsilon$,
  and let random variable $\Delta_j=M_j - z_j$, so $\mathbb{E}[\Delta_j]=\epsilon$ and denote $\text{Var} [\Delta_j]=\sigma^2_\Delta<\infty$. Then for $N \ge L$, $\mathbb{E}[\sum_{j=1}^N \Delta_j] = N\epsilon = b + (N-L)\epsilon$.
  By Chebyshev's inequality,
  \begin{align*}
    \Pr\left\{ \sum_{j=1}^N \Delta_j   \ge b \right\}\\
    &\hspace{-25mm} = \Pr\left\{\sum_{j=1}^N \Delta_j   \ge  \mathbb{E}\left[\sum_{j=1}^N \Delta_j\right] - (N-L)\epsilon\right\} \\
    &\hspace{-25mm}\ge 1 - \Pr\left\{ \left|\sum_{j=1}^N \Delta_j - \mathbb{E}\left[\sum_{j=1}^N \Delta_j\right] \right|
    \ge (N-L)\epsilon  \right\}\\
    &\hspace{-25mm}\ge 1 - \frac{N \sigma^2_\Delta}{(N-L)^2\epsilon^2} \to 1\ \ \text{as }N \to \infty.
  \end{align*}
  So with high probability there exists finite $N \ge L$ such that $b+\sum_{j=1}^{N} z_j \le \sum_{j=1}^{N} M_j$. And by Theorem \ref{thm1}, Bob is able to decode successfully at stage $N$. Next we determine the average rate, let $$N = \min\left\{i : b+\sum_{j=1}^{i} z_j \le \sum_{j=1}^{i} M_j\right\}.$$
  Then the average rate of the code is
  \begin{equation}\label{avr}
    r = \mathbb{E}\left[\frac{b}{N}\right] \ge \frac{b}{\mathbb{E}[N]},
  \end{equation}
  where the last inequality follows from Jensen's inequality. 
  Denote $S_\tau = \sum_{j=1}^{\tau}M_{j} - \sum_{j=1}^{\tau} z_{j}$, then $N$ is a stopping time for the random process $S_\tau$. Therefore, by
  the first Wald identity \cite{Grimmett01},
  \begin{align*}
     \mathbb{E}[N] ( \mathbb{E}[M]-\mathbb{E}[z]) & = \mathbb{E}[S_N]\\
     & \le b + \bar{c} -1.
  \end{align*}
  Therefore $$\mathbb{E}[N] \le \frac{b + \bar{c} - 1}{\mathbb{E}[M]-\mathbb{E}[z]}.$$
  Substituting into (\ref{avr}), it follows
  \begin{equation*}
    r \ge \frac{b}{b+\bar{c}- 1} \left(\mathbb{E}[M] - \mathbb{E}[z]\right).
  \end{equation*}
\end{proof}

Notice that as we choose a sufficiently large $b$, then the rate of the code is approaching $\mathbb{E}[M]-\mathbb{E}[z]$, which is shown to be the maximal achievable rate for networks with Byzantine adversaries \cite{Jaggi2005a}. The computational cost of design, encoding, and decoding is dominated by the cost of carrying out the matrix multiplication $Y^{(i)}D^{(i)}$ in (\ref{xs}), which is $O(n(i\bar{c})^3)$. Notice that because $Y^{(i)}$ and $D^{(i)}$ grow regularly, $Y^{(i)}D^{(i)}$ is a block of $Y^{(i+1)}D^{(i+1)}$, therefore careful implementation of the algorithm can improve complexity (though not in the order sense) by building on the results from the last stage. Assume in general that $Y^{(i)}D^{(i)}=A^T$ and
\begin{equation*}
  \left(Y^{(i+1)}D^{(i+1)}\right)^T = \left[ \begin{array}{cc}
    A & C\\
    B & D
  \end{array} \right].
\end{equation*}
Then at stage $i+1$ we only need to perform the multiplications corresponding to blocks $B,\ C,$ and $D$.
 The same trick applies when we are to perform row reduction on $Y^{(i)}D^{(i)}$.
Suppose at the $i$-th stage we have already reduced $A$ into row echelon form with matrix $R$, \emph{i.e.}, with high probability it follows (otherwise there is a decoding error)
\begin{equation*}
  R A = A' = \left[\begin{array}{c}
    I\\\bm{0}
  \end{array} \right].
\end{equation*}
At the $(i+1)$-th stage we want to reduce $\left(Y^{(i+1)}D^{(i+1)}\right)^T$ into row echelon form based on the knowledge of $R$. We can construct the row operations as the following steps: i) multiply $R$ to reduce block $A$ and obtain $[A'\ RC]$ in the upper blocks; ii) use $A'$ to cancel block $B$ to zero; iii) perform row reduction on the lower right block corresponding to $D$; iv) use the row reduced lower right block to cancel the upper right block to zero. Formally, let
\begin{equation*}
  R'=\left[\begin{array}{cc}
    I & -RC|\bm{0}\\
    \bm{0} & I
  \end{array} \right]
  \left[\begin{array}{cc}
    I & \bm{0}\\
    \bm{0} & D^-
  \end{array} \right]
    \left[\begin{array}{cc}
    I & \bm{0}\\
    -B|\bm{0} & \bm{0}
  \end{array} \right]
      \left[\begin{array}{cc}
    R & \bm{0}\\
    \bm{0} & I
  \end{array} \right]
\end{equation*}
where $D^-$ is the matrix that row reduces $[B|\bm{0}]RC+D$, where the $\bm{0}$ after $B$ is defined as a zero padding sub-matrix of appropriate dimension. Then it follows
\begin{equation*}
  R' Y^{(i+1)}D^{(i+1)} = \left[ \begin{array}{cc}
    \begin{array}{c}
      I\\\bm{0}
    \end{array} & \bm{0}\\
    \hline
    \bm{0} & \begin{array}{c}
      I\\\bm{0}
    \end{array}
  \end{array} \right].
\end{equation*}
Finally we only need to permute the rows to place the identity matrix on top. Note that by this algorithm at every stage we only need to perform row reduction on a small block (corresponding to $D$) plus several multiplications.
\section{Code Construction for Random Secret Model}
In this section we consider the case that a secret channel is not available between Alice and Bob, instead they only share a ``small'' random secrets whose size is negligible compared to the amount of information sent. The random secret model is similar to the previously discussed secret channel model with the difference that the secret information should be random and independent with the source message $X_0$. Therefore in this case Alice cannot send the hash of $X_0$ to Bob secretly and reliably. Nutman and Langberg \cite{Nutman2008} modifies the scheme in \cite{Jaggi2008} against Byzantine adversaries under the secret channel model so that it also works  for the random secret model. The essential idea is to prefix $X_0$ with a small matrix $L$ obtained by solving the hash relation $(L\ X_0)D=H$, where both $D$ and $H$ are the fixed random secrets. Then $(L\ X_0)$ is sent through the network as the new input message. However, there is no obvious way to make this scheme rateless, because in the rateless setting, the number of columns of $D$ and $H$ will grow over time as more linearly independent redundancy is needed. But this implies it is not possible to uniquely solve $L$ given a fixed $X_0$.

The difficulty in solving $L$ reveals the restriction of defining the hash relation in the matrix form of $X_0D=H$. Hence a more flexible way for hashing the data is required as discussed in the following. Recall that the vectorization of a matrix is a linear transformation which converts the matrix into a column vector by stacking the columns of the matrix on top of one another. Let column vector $\bm{w} \in \mathbb{F}^{bn}_q$ be the vectorized\footnote{Compared to linear transforms on the matrix $W$, linear transforms on the vector $\bm{w}$ are more general (every operation in the former class can be representation by an operation in the latter class, but not vice versa)
} $W$ (recall that $W$ is the matrix of raw incompressible source data before attaching the identity matrix, defined in Section \ref{secenc}). 

To generate the hashes, \emph{i.e.}, the linearly independent redundancy that is transmitted at the $k$-th stage. We first draw $\alpha_k$ symbols from the random shared secrets as $d_1^{(k)}$, $d_2^{(k)}$, $...$, $d_{\alpha_k}^{(k)} \in \mathbb{F}_q$, and use them to construct the $\alpha_k \times nb$ parity check matrix $D_k=\left[d_{ij}^{(k)}\right]$, where
$d_{ij}^{(k)}=\left(d_i^{(k)}\right)^j$, $\forall 1 \le i \le \alpha_k$, $1 \le j \le nb$.

Then we draw another $\alpha_k$ symbols from the random shared secrets as $h_1^{(k)}, h_2^{(k)}, ..., h_{\alpha_k}^{(k)}$ and use them to construct the hash of the message. Let $\bm{h}_k = (h_1^{(k)}, ..., h_{\alpha_k}^{(k)})^T$. Then the following parity check relation is enforced:
\begin{align}\label{hashvec}
  [D_k\ \ I_{\alpha_k}]  \left[\begin{array}{l}
    \bm{w} \\ \bm{l}_k
  \end{array}\right] = \bm{h}_k,
\end{align}
where $I_{\alpha_k}$ is the identity matrix of dimension $\alpha_k$ and $\bm{l}_k$ is a column vector of length $\alpha_k$ that can be solved uniquely from (\ref{hashvec}) as
\begin{align*}
  \bm{l}_k = \bm{h}_k - D_k\bm{w}
\end{align*}

Note that $\{\bm{l}_k\}$ does not need to be kept secret. Now we can readily construct a rateless parity check scheme based on (\ref{hashvec}):
\begin{align}\label{hashrl}
  \left[ \begin{array}{lllll}
  D_1& I_{\alpha_1}&0&...&0\\
  {D_2}& 0 & I_{\alpha_2} & ...&0\\
  \vdots & & \vdots &\\
  D_i & 0 & 0 & ... & I_{\alpha_i}
  \end{array} \right]
  \left[ \begin{array}{c}
    \bm{w}\\ \bm{l}_1\\\vdots\\\bm{l}_i
  \end{array} \right] =
  \left[ \begin{array}{c}
    \bm{h}_1\\\vdots\\\bm{h}_i
  \end{array} \right],
\end{align}
\emph{i.e.}, the total number of parity checks $\sum_i \alpha_i$ can grow over time if necessary.

(\ref{hashrl}) implies that we are hashing $(\bm{w}^T\ \bm{l}^T)^T$ instead of the message $\bm{w}$ itself. However, the advantage of introducing $\bm{l}_k$ is that by attaching a short suffix to $\bm{w}$, we can establish a virtual secret channel between Alice and Bob. 
 Nevertheless, the challenge is then we need to send not only $\bm{w}$ but also $\{\bm{l}_k\}$ over the unreliable network publicly. In the following we will discuss the structure of the encoder and decoder for transmitting them successfully.

\subsection{Encoder}
In order for Bob to decode successfully, both linearly dependent redundancy and linearly independent redundancy are required. Redundant information that lies in the row space of $X_0$ is called linearly dependent redundancy and is used to combat erasures in the network (deletion from the row space of $X_0$). Other redundant information is linearly independent redundancy and is used to distill ``fake information'' that adversaries inject into the network (addition to the row space of $X_0$).
In the case of the secret channel model, we send linearly dependent redundancy $\{X_i\}$ in the network and send linearly independent redundancy $\{D_i\}$ and $\{H_i\}$ on the secret channel. Compared to that, in the case of the random secret model, a secret channel is not available and both linearly dependent and independent redundancy must be sent through the public and unreliable network.

However, notice that the linearly dependent redundancy corresponds to long messages that are usually arranged into long packets, while the linearly independent redundancy is short. Its size is chosen to be independent with $n$, as it is desired that the amount of random secrets required is negligible comparing to the amount of information sent. Therefore, it is convenient to encode and send the two kinds of redundancy separately (for example, it would be wasteful of resources if linearly independent redundancy is sent in normal packets because it is too short to fill a packet) as long packets and short packets, respectively. We define $M_i$, $z_i$, $c_i$, $\bar{c}$ for long packets as described in Section \ref{secmodel}.
For short packets, denote $\bar{M}_i$, $\bar{c}_i$ and $\bar{z}_i$ as the min cut from Alice to Bob, the number of available transmission opportunities, and the min cut from Calvin to Bob at stage $i$, respectively. Similarly we assume $\bar{z}_i<\bar{M}_i,\ \forall i$.

We first discuss the encoding scheme for linearly dependent redundancy. 
The source input message is arranged as a $b\times n$ matrix $W$. 
Then we encode $X_0 =(W\  I_b)$. At the $i$-th stage, Alice draws a random matrix $K_i \in \mathbb{F}^{c_i \times b}_q$, and encodes the long packets $X_i = K_i X_0$.

To generate the linearly independent redundancy, at stage $i$ Alice sets $\alpha_i = i\sigma m$ (the choices of $\sigma$ and $m$ are discussed in the next paragraph), solves $\bm{l_i}$ according to (\ref{hashvec}), and arranges the column vector into a $\sigma \times i m$ matrix $\mathcal{L}_i$. Then she let $L_j = (\mathcal{L}_j\ \ \bm{0}_D\ \bm{0}_j\  I_{\sigma})$, $1 \le j \le i$, where $\bm{0}_D$ is a zero matrix of size $\sigma \times (i-j)m$, and $\bm{0}_j$ is the zero matrix of size $\sigma \times (j-1)\sigma$. $\bm{0}_D$ is dummy and is used to align $\mathcal{L}$, and $\bm{0}_j$ is used to align the identity matrix. 
Alice then draws a random matrix $G_i$ of size $\bar{c}_i \times i \sigma$ and encodes the short packets as
\begin{align*}
  A_i = G_i \left[\begin{array}{llll}
  L_1&0&...&0\\\cline{1-2}
  \multicolumn{2}{c}{L_2}&...&0\\
  \multicolumn{4}{c}{\cdots}\\\cline{1-4}
  \multicolumn{4}{c}{L_i}
  \end{array} \right] = G_i L^{(i)}.
\end{align*}

In order to eliminate the ``fake'' information injected by the adversaries, Alice should introduce an adequate amount of linearly independent redundancy, \emph{i.e.}, choosing $\{\alpha_i\}$ appropriately. Alice may choose any $\sigma$ such that $\sigma \le \bar{M}_i - \bar{z}_i,\ \forall i$ (e.g., $\sigma=1$ is a safe choice). She then chooses $m$ such that $\sigma m \ge 2b\bar{c} + 2\sigma\bar{c} +1$. Note that the size of the secret, $i(i+1)\sigma m /2$, is again negligible in $n$.

Finally, at the $i$-th stage Alice sends $X_i$ as long packets with packet length $n+b$, and $A_i$ as short packets with packet length $i(m+ \sigma)$. Alice repeats this procedure until Bob decodes successfully.

\subsection{Decoder}
At the $i$-th stage, Bob receives long packets $Y_i$ and short packets $J_i$:
\begin{align}
  Y_i & = T_i X_i + Q_i Z_i \label{Y_i2},\\
  J_i & = \bar{T}_i A_i + \bar{Q}_i E_i\label{J_i2},
\end{align}
where $T_i \in \mathbb{F}_{q}^{M_i \times c_i}$, $\bar{T}_i \in \mathbb{F}_{q}^{\bar{M}_i \times \bar{c}_i}$ are the transfer matrices between Alice and Bob, $Q_i \in \mathbb{F}_q^{M_i \times z_i},\ \bar{Q}_i \in  \mathbb{F}_q^{\bar{M}_i \times \bar{z}_i}$ are the transfer matrices between Calvin and Bob, and $Z_i \in \mathbb{F}_q^{z_i \times (n+b)}$, $E_i \in \mathbb{F}_q^{\bar{z}_i \times i(m+ \sigma)}$ are the errors injected to long packets and short packets, respectively.

Bob then stacks the long and short packets that he has received so far to get
\begin{equation*}
  Y^{(i)} = \left[ \begin{array}{c}
    Y_1\\
    \vdots\\
    Y_i
  \end{array} \right], \ \ \
  J^{(i)} = \left[ \begin{array}{llll}
  J_1&0&...&0\\\cline{1-2}
  \multicolumn{2}{c}{J_2}&...&0\\
  \multicolumn{4}{c}{\cdots}\\\cline{1-4}
  \multicolumn{4}{c}{J_i}
  \end{array} \right],
\end{equation*}
where dummy matrix $\bm{0}_D$ is padded to $\{J_k\}$ in the same way as it is padded to $\{L_k\}$.

Bob evaluates the rank of $Y^{(i)}$, and waits to receive more packets if $r_i=$Rank$(Y^{(i)}) < b$. When $r_i \ge b$, Bob tries to decode. Without loss of generality we assume 
the rows of $Y^{(i)}$ are linearly independent. Otherwise, 
Bob selects $r_i$ linearly independent rows from $Y^{(i)}$ and proceeds similarly. He then picks a basis for the column space of $Y^{(i)}$. As will be shown later, the last $b$ columns of $Y^{(i)}$ (corresponding to the identity matrix in $X_0$) are linearly independent $w.h.p.$, so they are chosen and is denoted by a $r_i \times b$ matrix $\hat{T}^{(i)}$. Without loss of generality (by permuting the columns if necessary) we assume that the remaining $r_i - b$ linearly independent columns correspond to the first $r_i - b$ columns of $Y^{(i)}$, denoted by a $r_i \times (r_i - b)$ matrix $T''^{(i)}$. So we can expand $Y^{(i)}$ with respect to this basis as
\begin{align}\label{Yi}
  Y^{(i)} = [ T''^{(i)}\ \hat{T}^{(i)} ] \left[ \begin{array}{lll}
    I_{r_i-b} & F^{Z} & 0\\
    0 & F^{X} & I_b
  \end{array} \right],
\end{align}
where $F^Z$ and $F^X$ are matrices of coefficients.

Bob deals with $J^{(i)}$ in a similar way. Let $\bar{r}_i$ be the rank of $J^{(i)}$, $\hat{\bar{T}}^{(i)} \in \mathbb{F}_q^{\bar{r}_i \times i\sigma}$ be the last $i\sigma$ columns of $J^{(i)}$, and $\bar{T}''^{(i)} \in \mathbb{F}_q^{ \bar{r}_i \times (\bar{r}_i-i\sigma)}$ be the first $\bar{r}_i-i\sigma$ columns of $J^{(i)}$. Then $w.h.p.$ $[\bar{T}''^{(i)}\  \hat{\bar{T}}^{(i)}]$ consists a basis of the column space of $J^{(i)}$, and we can write
\begin{align}\label{Ji}
  J^{(i)} = [\bar{T}''^{(i)}\  \hat{\bar{T}}^{(i)}] \left[ \begin{array}{lll}
    I_{\bar{r}_i-i \sigma} & F^{E} & 0\\
    0 & F^{A} & I_{i \sigma}
  \end{array} \right],
\end{align}
where $F^E$ and $F^A$ are matrices of coefficients.

Equations (\ref{Yi}) and (\ref{Ji}) characterize the relationship between the received observations and the input messages due to the effect of the network transform. In order to decode successfully, Bob needs to take into account the built-in redundancy of the message, \emph{i.e.}, the relation between $\bm{w}$ and $\{\bm{l}_i\}$, as follows. $\forall i$, split $X_0$ and $L^{(i)}$ as:
\begin{align}
  X_0& = [X_a^{(i)} \ X_b^{(i)} \ X_c^{(i)}]\label{cutx},\\
  L^{(i)}& = [L^{(i)}_a \ L^{(i)}_b \ L^{(i)}_c]\label{cutl},
\end{align}
where $X_a^{(i)}$ are the first $r_i - b$ columns of $X_0$, $X_c^{(i)}$ are the last $b$ columns of $X_0$, and $X_b^{(i)}$ are the remaining columns in the middle; $L^{(i)}_a$ are the first $\bar{r}_i - i\sigma$ columns of $L^{(i)}$, $L^{(i)}_c$ are the last $i \sigma$ columns of $L^{(i)}$, and $L^{(i)}_b$ are the remaining columns in the middle. Let $\bm{x}_a^{(i)}$, $\bm{x}_b^{(i)}$ and $\bm{x}_c^{(i)}$ be the vectorized versions of $X^{(i)}_a$, $X^{(i)}_b$ and $X^{(i)}_c$. Let $\bm{l}_a^{(i)}$, $\bm{l}_b^{(i)}$ and $\bm{l}_c^{(i)}$ be the vectorized versions of $L^{(i)}_a$, $L^{(i)}_b$ and $L^{(i)}_c$ omitting the dummy $\bm{0}_D$. By construction it follows that,
\begin{align}\label{xwcor}
  \left[ \begin{array}{c}
    \bm{x}^{(i)}_a\\\bm{x}^{(i)}_b\\\hline \bm{l}^{(i)}_a\\\bm{l}^{(i)}_b
  \end{array} \right]
  = \left[ \begin{array}{cc}
    \bm{w} \\\hline \bm{l}_1\\\vdots\\\bm{l}_i
  \end{array} \right].
\end{align}\
Note that $\bm{x}_c^{(i)}$ and $\bm{l}_c^{(i)}$ are left out because they correspond to the redundant identity matrix.
Now Bob constructs two matrix $B_{top}$ and $B_{mid}$ as defined in (\ref{Btop}) and (\ref{Bmid}), respectively. Here $f^Z_{i,j}$ and $f^E_{i,j}$ are the $(i,j)^{th}$ entries of matrix $F^Z$ and $F^E$, respectively, and $\beta=n+b-r_i$, $\gamma=i(m+\sigma)-\bar{r}_i$. He then deletes all columns in $B_{mid}$ corresponding to the positions of the dummy zero padding when vectoring $L^{(i)}$, and obtain a submatrix $B'_{mid}$.
\newcounter{tempequationcounter}
\begin{figure*}[!t]
\normalsize
\setcounter{tempequationcounter}{\value{equation}}
\addtocounter{tempequationcounter}{1}
\begin{IEEEeqnarray}{rCl}
\setcounter{equation}{\value{tempequationcounter}}
  B_{top} = \left[ \begin{array}{cccc|cccc}
    -f_{1,1}^Z \hat{T}^{(i)} &  -f_{2,1}^Z \hat{T}^{(i)} & ... & -f_{r_i-b,1}^Z\hat{T}^{(i)} & \hat{T}^{(i)} & 0 & ... & 0\\
    -f_{1,2}^Z \hat{T}^{(i)} &  -f_{2,2}^Z \hat{T}^{(i)} & ... & -f_{r_i-b,2}^Z\hat{T}^{(i)} & 0 & \hat{T}^{(i)} & ... & 0\\
    \vdots & \vdots & \vdots & \vdots & \vdots & \vdots & \vdots & \vdots\\
    -f_{1,\beta}^Z \hat{T}^{(i)} &  -f_{2,\beta}^Z \hat{T}^{(i)} & ... & -f_{r_i-b,\beta}^Z\hat{T}^{(i)} & 0 & 0 & ... & \hat{T}^{(i)}
  \end{array}
   \right]
\label{Btop}
\end{IEEEeqnarray}

\addtocounter{tempequationcounter}{1}
\begin{IEEEeqnarray}{rCl}
\setcounter{equation}{\value{tempequationcounter}}
  B_{mid} = \left[ \begin{array}{cccc|cccc}
    -f_{1,1}^E \hat{\bar{T}}^{(i)} &  -f_{2,1}^E \hat{\bar{T}}^{(i)} & ... & -f_{\bar{r}_i-i\sigma,1}^E\hat{\bar{T}}^{(i)} & \hat{\bar{T}}^{(i)} & 0 & ... & 0\\
    -f_{1,2}^E \hat{\bar{T}}^{(i)} &  -f_{2,2}^E \hat{\bar{T}}^{(i)} & ... & -f_{\bar{r}_i-i\sigma,2}^E\hat{\bar{T}}^{(i)} & 0 & \hat{\bar{T}}^{(i)} & ... & 0\\
    \vdots & \vdots & \vdots & \vdots & \vdots & \vdots & \vdots & \vdots\\
    -f_{1,\gamma}^E \hat{\bar{T}}^{(i)} &  -f_{2,\gamma}^E \hat{\bar{T}}^{(i)} & ... & -f_{\bar{r}_i-i\sigma,\gamma}^E\hat{\bar{T}}^{(i)} & 0 & 0 & ... & \hat{\bar{T}}^{(i)}
  \end{array}
   \right]
\label{Bmid}
\end{IEEEeqnarray}

\addtocounter{tempequationcounter}{1}
\addtocounter{tempequationcounter}{-1}
\setcounter{equation}{\value{tempequationcounter}}
\hrulefill
\vspace*{4pt}
\end{figure*}
Finally, Bob let
\begin{align*}
  B_{bot}=  \left[ \begin{array}{lllll}
  D_1& I_{\alpha_1}&0&...&0\\
  {D_2}& 0 & I_{\alpha_2} & ...&0\\
  \vdots & & \vdots &\\
  D_i & 0 & 0 & ... & I_{\alpha_i}
  \end{array} \right],
\end{align*}
Notice that if Bob permutes the columns of $Y^{(i)}$ and $J^{(i)}$ when constructing $T''^{(i)}$ and $\bar{T}''^{(i)}$, then he needs to permute the columns of $B_{bot}$ accordingly. Then he tries to solve the equations:
\begin{align}\label{keyeqn}
  B \left[ \begin{array}{c}
    \bm{x}^{(i)}_a\\\bm{x}^{(i)}_b\\\bm{l}^{(i)}_a\\\bm{l}^{(i)}_b
  \end{array} \right] = \left[ \begin{array}{c}
  \hat{\bm{T}}^{(i)} \bm{f}^X  \\ \hat{\bar{\bm{T}}}^{(i)} \bm{f}^A \\ \bm{h_1} \\ \vdots \\ \bm{h_i}
  \end{array} \right],
\end{align}
where $\bm{f}^X$, $\bm{f}^A$ are the vectorized versions of $F^X$, $F^A$, respectively, $\hat{\bm{T}}^{(i)} = \text{diag}[\hat{T}^{(i)},...,\hat{T}^{(i)}]$, $\hat{\bar{\bm{T}}}^{(i)} = \text{diag}[\hat{\bar{T}}^{(i)},...,\hat{\bar{T}}^{(i)}]$, and the matrix $B$ is defined as:
\begin{align*}
  B = \left[ \begin{array}{cc}
    B_{top} & 0\\
    0 & B'_{mid}\\
    \hline
    \multicolumn{2}{c}{B_{bot}}
  \end{array} \right].
\end{align*}

Bob tries to solve (\ref{keyeqn}) and, if there exists no solution, he waits for more redundancy from Alice and tries to solve it again at the next stage. If there is a unique solution to (\ref{keyeqn}), then Bob has decoded successfully with high probability. Otherwise, if there are multiple solutions, Bob declares a decoding failure.

\subsection{Performance}
In this section we show the proposed scheme will succeed with high probability and achieve the optimal rate. Our first step is to establish (\ref{Yi}) and (\ref{Ji}).
We first consider the short packets. Note that (\ref{Ji}) is shown by Lemma \label{tfullcol} below.
\begin{lemma}\label{tfullcol}
$\hat{\bar{T}}^{(i)}$ has full column rank with high probability.
\end{lemma}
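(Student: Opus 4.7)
My plan is to establish the lemma in two stages: first show that the last $i\sigma$ columns of the full matrix $J^{(i)}$ are linearly independent with high probability, and then conclude for $\hat{\bar{T}}^{(i)}$ by observing that restricting $J^{(i)}$ to any $\bar{r}_i$ linearly independent rows preserves linear independence of the selected columns.

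For the first stage, I would begin by exposing the block structure of these columns. By the encoder construction, the last $i\sigma$ columns of $L^{(i)}$ form the identity matrix $I_{i\sigma}$ (the $I_\sigma$ blocks in each $L_j$, aligned by the $\bm{0}_j$ padding, tile a block-diagonal identity), so the last $k\sigma$ columns of $A_k = G_k L^{(k)}$ equal $G_k$ exactly, and consequently the last $k\sigma$ columns of $J_k = \bar{T}_k A_k + \bar{Q}_k E_k$ equal $\bar{T}_k G_k + \bar{Q}_k E_k^I$, where $E_k^I$ denotes Calvin's injection restricted to those columns. After the zero padding used to assemble $J^{(i)}$, the last $i\sigma$ columns of $J^{(i)}$ form a lower staircase block-triangular matrix whose row block $k$ contributes $\bar{T}_k G_k + \bar{Q}_k E_k^I$ in its first $k\sigma$ columns and zeros in the remaining $(i-k)\sigma$ columns. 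A standard block-triangular induction (killing the first $\sigma$ columns using row block $1$, then the next $\sigma$ using row block $2$, and so on) then reduces full column rank $i\sigma$ to the per-stage claim that, for each $k$, the last $\sigma$ columns of $\bar{T}_k G_k + \bar{Q}_k E_k^I$ are linearly independent in $\mathbb{F}_q^{\bar{M}_k}$.

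I would then handle the per-stage claim by adapting Lemma \ref{disjoint} to the short packets. Writing $\tilde{G}_k$ and $\tilde{E}_k$ for the last $\sigma$ columns of $G_k$ and $E_k^I$, the goal is that $\bar{T}_k \tilde{G}_k + \bar{Q}_k \tilde{E}_k$ has rank $\sigma$. Since $\sigma \le \bar{M}_k - \bar{z}_k$, the Schwartz-Zippel plus union-bound argument from Lemma \ref{disjoint} shows that the column spans of $\bar{T}_k$ and $\bar{Q}_k$ are disjoint with high probability, which forces any nonzero dependence vector $c \in \mathbb{F}_q^\sigma$ to satisfy $\bar{T}_k \tilde{G}_k c = 0$ and $\bar{Q}_k \tilde{E}_k c = 0$ separately. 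Conditioning on $\bar{T}_k$ attaining its full row rank $\bar{M}_k$ (again with high probability under random network coding), the map $v \mapsto \bar{T}_k v$ is surjective onto $\mathbb{F}_q^{\bar{M}_k}$, and Alice's independent uniform $\tilde{G}_k$ makes $\bar{T}_k \tilde{G}_k$ distributionally a uniform $\bar{M}_k \times \sigma$ matrix, which has full column rank $\sigma$ with probability tending to one (since $\sigma \le \bar{M}_k$). A final union bound over $k = 1, \ldots, i$ concludes the proof.

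The main obstacle I anticipate is the adversary's adaptivity: because the last $k\sigma$ columns of the public packets $A_k$ are literally $G_k$, Calvin observes $\tilde{G}_k$ and can tailor $\tilde{E}_k$ to it. The disjoint-column-span reduction is what neutralizes this threat, since disjointness is a property of the random network code alone, and once it holds, the $\bar{Q}_k \tilde{E}_k$ term cannot cancel any nonzero vector in the image of $\bar{T}_k \tilde{G}_k$; the remaining work is then a pure randomness argument on Alice's $\tilde{G}_k$ against the random $\bar{T}_k$.
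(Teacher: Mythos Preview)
Your block-triangular decomposition is a valid alternative route, but the per-stage step contains a real error. You assert that ``the column spans of $\bar{T}_k$ and $\bar{Q}_k$ are disjoint with high probability,'' yet two sentences later you (correctly) condition on $\bar{T}_k$ having full row rank $\bar{M}_k$. Once $\bar{T}_k$ is surjective onto $\mathbb{F}_q^{\bar{M}_k}$, its column span is the whole space and necessarily contains the column span of $\bar{Q}_k$; disjointness is impossible unless $\bar{Q}_k=0$. The repair is simple but important: Lemma~\ref{disjoint} is not a statement about the raw transfer matrix versus $Q$, but about the transfer matrix \emph{composed with Alice's random encoding} versus $Q$. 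Applying it per stage with $\tilde{G}_k$ in the role of $K_j$ and using $\sigma+\bar{z}_k\le\bar{M}_k$ yields directly that $\bar{T}_k\tilde{G}_k$ has full column rank $\sigma$ and that its span is disjoint from that of $\bar{Q}_k$ with high probability, which is exactly the diagonal-block claim you need. Your subsequent argument that $\bar{T}_k\tilde{G}_k$ is distributed as a uniform random $\bar{M}_k\times\sigma$ matrix then becomes redundant.

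For comparison, the paper does not decompose per stage. It stacks everything into block-diagonal $\bar{T}^{(i)},\bar{Q}^{(i)}$ and a staircase $G^{(i)}$, writes the last $i\sigma$ columns of $J^{(i)}$ globally as $\bar{T}^{(i)}G^{(i)}+\bar{Q}^{(i)}E^{(i)}_{\text{rear}}$, and invokes the Lemma~\ref{disjoint} argument once, using the aggregate inequality $i\sigma\le\sum_j(\bar{M}_j-\bar{z}_j)$, to conclude that $\bar{T}^{(i)}G^{(i)}$ has full column rank $i\sigma$ with span disjoint from that of $\bar{Q}^{(i)}$. Your route makes the lower-triangular structure explicit and needs only the individual bounds $\sigma\le\bar{M}_k-\bar{z}_k$; the paper's route is terser. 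Once your slip is fixed, both arguments go through.
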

\begin{proof} For notational convenience we define
\begin{align*}
  \bar{T}^{(i)} = \left[ \begin{array}{llll}
    \bar{T}_1 & 0 & ... & 0\\
    0 & \bar{T_2} & ... & 0\\
    \multicolumn{4}{c}{...}\\
    0 & 0 & ... & \bar{T}_i
  \end{array}
  \right],\
      A^{(i)} = \left[ \begin{array}{llll}
  A_1&0&...&0\\\cline{1-2}
  \multicolumn{2}{c}{A_2}&...&0\\
  \multicolumn{4}{c}{\cdots}\\\cline{1-4}
  \multicolumn{4}{c}{A_i}
  \end{array} \right]
\end{align*}
\begin{align*}
  \bar{Q}^{(i)} = \left[ \begin{array}{llll}
    \bar{Q}_1 & 0 & ... & 0\\
    0 & \bar{Q}_2 & ... & 0\\
    \multicolumn{4}{c}{...}\\
    0 & 0 & ... & \bar{Q}_i
  \end{array}
  \right],\
      E^{(i)} = \left[ \begin{array}{llll}
  E_1&0&...&0\\\cline{1-2}
  \multicolumn{2}{c}{E_2}&...&0\\
  \multicolumn{4}{c}{\cdots}\\\cline{1-4}
  \multicolumn{4}{c}{E_i}
  \end{array} \right]
\end{align*}
Then we have the concise relationship from (\ref{Y_i2}) and (\ref{J_i2}):
\begin{align}
  J^{(i)} &= \bar{T}^{(i)} A^{(i)} + \bar{Q}^{(i)} E^{(i)}\nonumber\\
          &= \bar{T}^{(i)} G^{(i)} L^{(i)} + \bar{Q}^{(i)} E^{(i)} \label{Jifull},
\end{align}
where
\begin{equation*}
      G^{(i)} = \left[ \begin{array}{llll}
  G_1&0&...&0\\\cline{1-2}
  \multicolumn{2}{c}{G_2}&...&0\\
  \multicolumn{4}{c}{\cdots}\\\cline{1-4}
  \multicolumn{4}{c}{G_i}
  \end{array} \right].
\end{equation*}
By construction, $\sum_{j=1}^i \bar{M}_j - \sum_{j=1}^i \bar{z}_j \ge i \sigma$. By an argument identical to Lemma \ref{disjoint}, this implies that with high probability $\bar{T}^{(i)}$ and $\bar{Q}^{(i)}$ both have full column rank and span disjoint column spaces (except for the zero vector). Random matrix $G^{(i)}$ also has full rank \emph{w.h.p.} and therefore $\bar{T}^{(i)} G^{(i)}$ and $\bar{Q}^{(i)}$ also both have full column rank and span disjoint column spaces. We can write the last $i \sigma$ columns in (\ref{Jifull}) corresponding to the redundant identity matrix in $L^{(i)}$ as
\begin{align}\label{tcap}
  \hat{\bar{T}}^{(i)} = \bar{T}^{(i)} G^{(i)} + \bar{Q}^{(i)} E^{(i)}_{\text{rear}},
\end{align}
where $E^{(i)}_{\text{rear}}$ are the last $i\sigma$ columns of $E^{(i)}$. Hence the columns of $\hat{\bar{T}}^{(i)}$ are linearly independent \emph{w.h.p.}.
\end{proof}
A similar argument also holds for the long packets:
\begin{lemma}
  If $\sum_{j=1}^i M_j - \sum_{j=1}^i z_j \ge b$, then $\hat{{T}}^{(i)}$ has full column rank with high probability.
\end{lemma}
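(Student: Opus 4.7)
The plan is to mirror the argument of Lemma \ref{tfullcol} (and, at one remove, Lemma \ref{disjoint}), transcribing it from the short-packet setting to the long-packet setting. The first step is to write $\hat{T}^{(i)}$ explicitly in terms of the network transfer matrices. Since $X_0 = (W\ I_b)$ and $X_j = K_j X_0$, the equation $Y_j = T_j X_j + Q_j Z_j$ tells us that the last $b$ columns of $Y_j$ (the columns corresponding to the $I_b$ block of $X_0$) are $T_j K_j + Q_j Z_{j,\text{rear}}$, where $Z_{j,\text{rear}}$ denotes the last $b$ columns of $Z_j$. Stacking vertically across $j = 1, \ldots, i$ gives
\begin{equation*}
\hat{T}^{(i)} \;=\; \mathbf{T}^{(i)} \;+\; \mathbf{Q}^{(i)} Z^{(i)}_{\text{rear}},
\end{equation*}
where $\mathbf{T}^{(i)} = [\,T_1 K_1;\, T_2 K_2;\, \ldots;\, T_i K_i\,]$ is the vertical stack and $\mathbf{Q}^{(i)} = \operatorname{diag}(Q_1, \ldots, Q_i)$ is block-diagonal, exactly as in the construction of $[T^{(i)} \mid Q^{(i)}]$ of Lemma \ref{disjoint}.

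The second step is to invoke Lemma \ref{disjoint} under the hypothesis $\sum_{j=1}^i M_j - \sum_{j=1}^i z_j \ge b$, which is precisely the condition $b + \sum z_j \le \sum M_j$ appearing there. That lemma gives, with high probability, that $\mathbf{T}^{(i)}$ has full column rank $b$, that $\mathbf{Q}^{(i)}$ has full column rank $\sum_j z_j$ (wlog, by passing to a basis of its column space), and that the column spans of $\mathbf{T}^{(i)}$ and $\mathbf{Q}^{(i)}$ intersect only in the zero vector.

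The final step concludes the column rank of $\hat{T}^{(i)}$ from these three properties. Suppose $\hat{T}^{(i)} v = 0$ for some $v \in \mathbb{F}_q^b$. Then $\mathbf{T}^{(i)} v = -\mathbf{Q}^{(i)}\bigl(Z^{(i)}_{\text{rear}} v\bigr)$, so this common vector lies in both column spans; by disjointness it must be zero, and by full column rank of $\mathbf{T}^{(i)}$ this forces $v = 0$. Hence $\hat{T}^{(i)}$ has full column rank with high probability. I do not expect any real obstacle here: the only thing that requires care is correctly identifying $\hat{T}^{(i)}$ in the random secret notation with the last $b$ columns of $Y^{(i)}$ and expanding it through the network transform, after which the argument reduces to Lemma \ref{disjoint} verbatim (and indeed is structurally identical to equation (\ref{tcap}) in the proof of Lemma \ref{tfullcol}).
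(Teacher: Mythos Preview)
Your proposal is correct and follows essentially the same approach as the paper: write $\hat{T}^{(i)} = T^{(i)} + Q^{(i)} Z^{(i)}_{\text{rear}}$ by isolating the last $b$ columns of $Y^{(i)}$, then invoke Lemma~\ref{disjoint} to get that $T^{(i)}$ and $Q^{(i)}$ have full column rank and disjoint column spans, from which the full column rank of $\hat{T}^{(i)}$ follows. The only difference is cosmetic---you spell out the final linear-algebra step explicitly, whereas the paper leaves it implicit.
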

\begin{proof}
  Consider the last $b$ columns in (\ref{first}) corresponding to the redundant identity matrix in $X_0$
  \begin{align*}
    \hat{T}^{(i)} = T^{(i)} + Q^{(i)} Z^{(i)}_{\text{rear}}.
  \end{align*}
  And by Lemma (\ref{disjoint}), $T^{(i)}$ and $Q^{(i)}$ both have full column rank and span disjoint column spaces (except for the zero vector). Hence the columns of $\hat{{T}}^{(i)}$ are linearly independent.
\end{proof}

Now we are ready to analyze the key equation (\ref{keyeqn}). We first prove a related lemma.
\begin{lemma}
  With high probability (\ref{J_i2}) and (\ref{Ji}) are equivalent to the following equation:
  \begin{equation}\label{tlmatch}
    \hat{\bar{T}}^{(i)} L_b^{(i)} = \hat{\bar{T}}^{(i)} ( F^{A} + L^{(i)}_a F^E).
  \end{equation}
\end{lemma}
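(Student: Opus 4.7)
The plan is to match the two characterizations of $J^{(i)}$ column-block by column-block according to the partition $L^{(i)} = [L_a^{(i)}\ L_b^{(i)}\ L_c^{(i)}]$ of (\ref{cutl}), and then to invoke the column-space disjointness of $\bar{T}^{(i)} G^{(i)}$ and $\bar{Q}^{(i)}$ that was established $w.h.p.$ in the proof of Lemma~\ref{tfullcol}.

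First I would rewrite (\ref{J_i2}) in its aggregated form (\ref{Jifull}) and split the error matrix $E^{(i)}$ into column blocks $E_a^{(i)}, E_b^{(i)}, E_{\text{rear}}^{(i)}$ conforming to the partition of $L^{(i)}$. Matching the three column blocks of (\ref{Jifull}) against those of (\ref{Ji}) yields three identities: from the rear block, $\hat{\bar{T}}^{(i)} = \bar{T}^{(i)} G^{(i)} + \bar{Q}^{(i)} E_{\text{rear}}^{(i)}$, i.e., exactly (\ref{tcap}), using that $L_c^{(i)} = I_{i\sigma}$ by construction of $L^{(i)}$; from the front block, $\bar{T}''^{(i)} = \bar{T}^{(i)} G^{(i)} L_a^{(i)} + \bar{Q}^{(i)} E_a^{(i)}$; and from the middle block, $\bar{T}''^{(i)} F^E + \hat{\bar{T}}^{(i)} F^A = \bar{T}^{(i)} G^{(i)} L_b^{(i)} + \bar{Q}^{(i)} E_b^{(i)}$. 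Substituting the first two into the third and regrouping gives
$$\bar{T}^{(i)} G^{(i)} \bigl(L_b^{(i)} - F^A - L_a^{(i)} F^E\bigr) = \bar{Q}^{(i)} \bigl(E_a^{(i)} F^E + E_{\text{rear}}^{(i)} F^A - E_b^{(i)}\bigr).$$

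The crux of the argument is that, by the proof of Lemma~\ref{tfullcol}, $w.h.p.$ $\bar{T}^{(i)} G^{(i)}$ has full column rank and its column space meets that of $\bar{Q}^{(i)}$ only at the zero vector; this forces both sides of the displayed equation to vanish, so $L_b^{(i)} = F^A + L_a^{(i)} F^E$, and pre-multiplying by $\hat{\bar{T}}^{(i)}$ yields (\ref{tlmatch}). The reverse implication follows because $\hat{\bar{T}}^{(i)}$ has full column rank $w.h.p.$, making (\ref{tlmatch}) equivalent to the sharper pointwise equality, from which the middle-block identity and hence the information encoded by (\ref{J_i2})--(\ref{Ji}) about $L_b^{(i)}$ can be recovered. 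The main subtlety is this separation step: without the column-space disjointness one cannot split a single matrix identity into two independent ones, but Lemma~\ref{tfullcol} delivers precisely what is needed.
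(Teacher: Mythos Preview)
Your proof is correct and follows essentially the same strategy as the paper: equate the two representations of $J^{(i)}$, then invoke the column-space disjointness established in Lemma~\ref{tfullcol} to split the identity into a signal part and an error part. The only cosmetic difference is that the paper first substitutes (\ref{tcap}) to rewrite everything in terms of $\hat{\bar{T}}^{(i)}$ and $\bar{Q}^{(i)}$, introduces auxiliary matrices $V_1,V_2$ with $\bar{T}''^{(i)}=\hat{\bar{T}}^{(i)}V_1+\bar{Q}^{(i)}V_2$, and separates on that pair (later eliminating $V_1$ via the front block); you instead match column blocks directly and separate on $\bar{T}^{(i)}G^{(i)}$ versus $\bar{Q}^{(i)}$, which is slightly more streamlined and yields the sharper identity $L_b^{(i)}=F^A+L_a^{(i)}F^E$ before premultiplying by $\hat{\bar{T}}^{(i)}$.
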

\begin{proof}
  We use a technique similar to \cite{Jaggi2008}. Substituting (\ref{tcap}) to (\ref{Jifull}), it follows
  \begin{equation*}
    J^{(i)} = \hat{\bar{T}}^{(i)} L^{(i)} + \bar{Q}^{(i)} (E^{(i)} - E^{(i)}_{\text{rear}} L^{(i)} ).
  \end{equation*}
  Then by (\ref{Ji}) we have:
  \begin{multline}\label{compare}
    \hat{\bar{T}}^{(i)} L^{(i)} + \bar{Q}^{(i)} (E^{(i)} - E^{(i)}_{\text{rear}} L^{(i)} )
    =\\ \hat{\bar{T}}^{(i)} [0\ F^A\ I_{i \sigma} ] + \bar{T}''^{(i)} [I_{\bar{r}_i - i \sigma}\ F^E\ 0].
  \end{multline}

  Therefore the columns of $\bar{T}''^{(i)}$ are spanned by the columns of $[\hat{\bar{T}}^{(i)}\ \bar{Q}^{(i)}]$. So there exists matrices $V_1$ and $V_2$ such that
  \begin{equation*}
    \bar{T}''^{(i)} = \hat{\bar{T}}^{(i)} V_1 + \bar{Q}^{(i)} V_2.
  \end{equation*}
  And we can rewrite (\ref{compare}) as
    \begin{multline}\label{compare2}
    \hat{\bar{T}}^{(i)} L^{(i)} + \bar{Q}^{(i)} (E^{(i)} - E^{(i)}_{\text{rear}} L^{(i)} )
    =\\ \hat{\bar{T}}^{(i)} [0\ F^A\ I_{i \sigma} ] + (\hat{\bar{T}}^{(i)} V_1 + \bar{Q}^{(i)} V_2) [I_{\bar{r}_i - i \sigma}\ F^E\ 0].
  \end{multline}
  However, notice from (\ref{tcap}) that the columns of $\hat{\bar{T}}^{(i)}$ and the columns of $\bar{Q}^{(i)}$ are $w.h.p.$ linearly independent, \emph{i.e.}, the column spaces of $\hat{\bar{T}}^{(i)}$ and $\bar{Q}^{(i)}$ are disjoint except for the zero vector. Hence (\ref{compare2}) implies the following set of equations:
  \begin{align}\label{compared}
    \hat{\bar{T}}^{(i)} L^{(i)} = \hat{\bar{T}}^{(i)}[0\ F^A\ I_{i \sigma} ] + \hat{\bar{T}}^{(i)} V_1 [I_{\bar{r}_i - i \sigma}\ F^E\ 0]\\
    \bar{Q}^{(i)} (E^{(i)} - E^{(i)}_{\text{rear}} L^{(i)} ) = \bar{Q}^{(i)} V_2 [I_{\bar{r}_i - i \sigma}\ F^E\ 0].
  \end{align}
  Here (\ref{compared}) suffices for the purpose of decoding $X_0$. We split (\ref{compared}) into three parts as in (\ref{cutl}), and get:
  \begin{align}
    \hat{\bar{T}}^{(i)} L^{(i)}_a &= \hat{\bar{T}}^{(i)} V_1 \label{ca}\\
    \hat{\bar{T}}^{(i)} L^{(i)}_b &= \hat{\bar{T}}^{(i)} F^A + \hat{\bar{T}}^{(i)} V_1 F^E\label{cb}\\
    \hat{\bar{T}}^{(i)} L^{(i)}_c &= \hat{\bar{T}}^{(i)}\label{cc}
  \end{align}
  By Lemma \ref{tfullcol}, $\hat{\bar{T}}^{(i)}$ has full column rank and is left-invertible, therefore by (\ref{ca}) it follows $L^{(i)}_a = V_1$. Substituting it into (\ref{cb})
  \begin{align*}
    \hat{\bar{T}}^{(i)} L^{(i)}_b &= \hat{\bar{T}}^{(i)} F^A + \hat{\bar{T}}^{(i)} L^{(i)}_a F^E.
  \end{align*}
  Finally, notice that by construction $L^{(i)}_c$ is the identity matrix and therefore (\ref{cc}) is redundant. Hence we can conclude that (\ref{J_i2}) and (\ref{Ji}) are equivalent to $$\hat{\bar{T}}^{(i)} L^{(i)}_b = \hat{\bar{T}}^{(i)} (F^A + L^{(i)}_a F^E)$$.
\end{proof}

By a similar argument, for long packets we have the following result:
\begin{lemma}
  If $\sum_{j=1}^i M_j - \sum_{j=1}^i z_j \ge b$, then with high probability (\ref{Y_i2}) and (\ref{Yi}) are equivalent to
    \begin{equation}\label{txmatch}
    \hat{{T}}^{(i)} X_b^{(i)} = \hat{{T}}^{(i)} ( F^{X} + X^{(i)}_a F^Z).
  \end{equation}
\end{lemma}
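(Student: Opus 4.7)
The plan is to mirror the proof of the analogous short-packet lemma line by line, using the long-packet versions of all quantities. The key inputs are (i) the network-transform equation $Y^{(i)} = T^{(i)} X_0 + Q^{(i)} Z^{(i)}$ coming from (\ref{first}), (ii) the identity $\hat{T}^{(i)} = T^{(i)} + Q^{(i)} Z^{(i)}_{\mathrm{rear}}$ already established when proving that $\hat{T}^{(i)}$ has full column rank, and (iii) the fact, guaranteed by Lemma \ref{disjoint} under the hypothesis $\sum_j M_j - \sum_j z_j \ge b$, that the column spaces of $T^{(i)}$ and $Q^{(i)}$ (equivalently, of $\hat{T}^{(i)}$ and $Q^{(i)}$) intersect only in the zero vector with high probability.

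First, I would substitute (ii) into the network-transform equation to rewrite $Y^{(i)} = \hat{T}^{(i)} X_0 + Q^{(i)}(Z^{(i)} - Z^{(i)}_{\mathrm{rear}} X_0)$. Then, equating this with the basis expansion (\ref{Yi}), I get
\begin{equation*}
  \hat{T}^{(i)} X_0 + Q^{(i)}(Z^{(i)} - Z^{(i)}_{\mathrm{rear}} X_0) = \hat{T}^{(i)}[0\ F^X\ I_b] + T''^{(i)}[I_{r_i-b}\ F^Z\ 0].
\end{equation*}
Next I would invoke the fact that the columns of $T''^{(i)}$ lie in the column span of $[\hat{T}^{(i)}\ Q^{(i)}]$, writing $T''^{(i)} = \hat{T}^{(i)} U_1 + Q^{(i)} U_2$ for some $U_1, U_2$. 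Plugging in and then using the disjoint-column-spaces property, the single matrix equation splits cleanly into one equation living in the column space of $\hat{T}^{(i)}$ and one living in the column space of $Q^{(i)}$; the latter is irrelevant for decoding $X_0$, so only the $\hat{T}^{(i)}$-part is retained:
\begin{equation*}
  \hat{T}^{(i)} X_0 = \hat{T}^{(i)}[0\ F^X\ I_b] + \hat{T}^{(i)} U_1 [I_{r_i-b}\ F^Z\ 0].
\end{equation*}

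Finally I would split $X_0 = [X_a^{(i)}\ X_b^{(i)}\ X_c^{(i)}]$ as in (\ref{cutx}) and read off three block equations. Since $\hat{T}^{(i)}$ has full column rank by the preceding lemma, it is left-invertible, so the first block yields $X_a^{(i)} = U_1$; the third block reduces to $\hat{T}^{(i)} X_c^{(i)} = \hat{T}^{(i)}$, which is automatic because $X_c^{(i)} = I_b$; and the middle block, after substituting $U_1 = X_a^{(i)}$, gives exactly $\hat{T}^{(i)} X_b^{(i)} = \hat{T}^{(i)}(F^X + X_a^{(i)} F^Z)$, as required. The only delicate point is justifying, with high probability, that $[\hat{T}^{(i)}\ Q^{(i)}]$ spans the column space of $T''^{(i)}$ and that these two blocks span disjoint subspaces; this is exactly the content of Lemma \ref{disjoint} applied to the long-packet stream under the stated cut-set hypothesis, so no new estimate is needed.
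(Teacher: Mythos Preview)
Your proposal is correct and is exactly the argument the paper intends: the paper's own proof consists only of the sentence ``By a similar argument, for long packets we have the following result,'' and your write-up carries out precisely that similar argument, mirroring the short-packet lemma step for step with $T^{(i)}, Q^{(i)}, X_0, Z^{(i)}, \hat{T}^{(i)}, T''^{(i)}$ in place of their barred counterparts.
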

\begin{cor}\label{keyeqnproof}
  If $\sum_{j=1}^i M_j - \sum_{j=1}^i z_j \ge b$, then the matrix equation (\ref{keyeqn}) holds with high probability.
\end{cor}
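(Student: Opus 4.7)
\textbf{Proof plan for Corollary~\ref{keyeqnproof}.} The plan is to decompose the block equation (\ref{keyeqn}) into its three row-blocks, namely
\begin{align*}
  B_{top}\,\bigl[\bm{x}_a^{(i)\,T}\ \bm{x}_b^{(i)\,T}\bigr]^T &= \hat{\bm{T}}^{(i)}\bm{f}^X,\\
  B'_{mid}\,\bigl[\bm{l}_a^{(i)\,T}\ \bm{l}_b^{(i)\,T}\bigr]^T &= \hat{\bar{\bm{T}}}^{(i)}\bm{f}^A,\\
  B_{bot}\,\bigl[\bm{x}_a^{(i)\,T}\ \bm{x}_b^{(i)\,T}\ \bm{l}_a^{(i)\,T}\ \bm{l}_b^{(i)\,T}\bigr]^T &= \bigl[\bm{h}_1^T\ \cdots\ \bm{h}_i^T\bigr]^T,
\end{align*}
and to verify each separately, then combine by the union bound.

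\textbf{Top block.} I would show that $B_{top}[\bm{x}_a^{(i)\,T}\ \bm{x}_b^{(i)\,T}]^T = \hat{\bm{T}}^{(i)}\bm{f}^X$ is nothing but the column-wise vectorization of (\ref{txmatch}). Writing $\hat{T}^{(i)}X_b^{(i)} = \hat{T}^{(i)}(F^X + X_a^{(i)}F^Z)$ column by column and rearranging gives, for each $k=1,\ldots,\beta$,
\[
  -\sum_{j=1}^{r_i-b} f^Z_{j,k}\,\hat{T}^{(i)}(X_a^{(i)})_j + \hat{T}^{(i)}(X_b^{(i)})_k = \hat{T}^{(i)}(F^X)_k,
\]
which is exactly the $k$-th block-row of $B_{top}$ applied to the vectorization. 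Since (\ref{txmatch}) holds w.h.p.\ whenever $\sum_{j=1}^i M_j - \sum_{j=1}^i z_j \ge b$ (the preceding lemma), the top block is established under the stated hypothesis.

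\textbf{Middle block.} Exactly the same bookkeeping applies to (\ref{tlmatch}), except that $L^{(i)}$ contains the dummy zero pads $\bm{0}_D$, so the honest unknowns are the vectorization of $L^{(i)}$ with those zeros removed. This is precisely what the deletion step in the construction of $B'_{mid}$ from $B_{mid}$ accomplishes: the columns of $B_{mid}$ that multiply the dummy entries are removed so that the residual product still equals the column-wise unrolling of $\hat{\bar{T}}^{(i)} L_b^{(i)} = \hat{\bar{T}}^{(i)}(F^A + L_a^{(i)}F^E)$. The hypothesis needed here, $\sum_{j=1}^i \bar{M}_j - \sum_{j=1}^i \bar{z}_j \ge i\sigma$, is already guaranteed by Alice's choice of $\sigma$ in the encoder, so (\ref{tlmatch}) holds w.h.p.\ unconditionally and yields the middle block.

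\textbf{Bottom block and combination.} The bottom block is not a network-induced relation but a deterministic identity: by the identification (\ref{xwcor}), the concatenation $(\bm{x}_a^{(i)},\bm{x}_b^{(i)},\bm{l}_a^{(i)},\bm{l}_b^{(i)})$ equals $(\bm{w},\bm{l}_1,\ldots,\bm{l}_i)$ up to the deletion of the redundant identity columns, and $B_{bot}$ is by definition the parity-check matrix in (\ref{hashrl}), so this row block holds deterministically by construction of $\{\bm{l}_k\}$ from (\ref{hashvec}). A union bound over the two high-probability events from the top and middle blocks then gives (\ref{keyeqn}) w.h.p.\ under the single assumption $\sum_{j=1}^i M_j - \sum_{j=1}^i z_j \ge b$. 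The main bookkeeping obstacle is simply making the vectorization-vs.-column-indexing correspondence between $B_{top},B'_{mid}$ and (\ref{txmatch}),(\ref{tlmatch}) explicit, together with tracking which columns of $B_{mid}$ are discarded when forming $B'_{mid}$; no new probabilistic argument is required beyond those already used to establish (\ref{txmatch}) and (\ref{tlmatch}).
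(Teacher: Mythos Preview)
Your proposal is correct and follows essentially the same approach as the paper: decompose (\ref{keyeqn}) into its three row-blocks and identify them respectively with (\ref{txmatch}), (\ref{tlmatch}), and (\ref{hashrl}), noting that the deletion of columns in $B'_{mid}$ exactly matches the dummy zero padding in $L^{(i)}$. The paper's own proof is terser---it simply asserts these three equivalences without unpacking the vectorization correspondence---but your more explicit bookkeeping and the observation that the bottom block is deterministic while only the top and middle blocks require the high-probability lemmas add clarity without departing from the paper's argument.
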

\begin{proof}
  Notice that (\ref{keyeqn}) are equivalent to the following set of three matrix equations:
  \begin{align}\label{txmatch2}
    B_{top} \left[\begin{array}{c}
              \bm{x}^{(i)}_a\\
              \bm{x}^{(i)}_b
    \end{array}\right] & = \hat{\bm{T}}^{(i)} \bm{f}^X\\\label{tlmatch2}
    B'_{mid} \left[\begin{array}{c}
              \bm{l}^{(i)}_a\\
              \bm{l}^{(i)}_b
    \end{array}\right] & = \hat{\bar{\bm{T}}}^{(i)} \bm{f}^A\\\label{hashmatch2}
      B_{bot} \left[ \begin{array}{c}
    \bm{x}^{(i)}_a\\\bm{x}^{(i)}_b\\\bm{l}^{(i)}_a\\\bm{l}^{(i)}_b
  \end{array} \right] & = \left[ \begin{array}{c}
   \bm{h_1} \\ \vdots \\ \bm{h_i}
  \end{array} \right]
  \end{align}
  But now note that (\ref{txmatch2}) is equivalent to (\ref{txmatch}); (\ref{hashmatch2}) is equivalent to (\ref{hashrl}); and (\ref{tlmatch2}) is equivalent to (\ref{tlmatch}) because all the deleted columns correspond to the zero padding in $L^{(i)}$.
\end{proof}

Finally we need to prove that (\ref{keyeqn}) has a unique solution, \emph{i.e.}, the probability of decoding an error packet is vanishing:
\begin{lemma}\label{noX'}
  If $\sigma m \ge 2b\bar{c} + 2\sigma\bar{c}+1$, then with high probability there does not exist $X' \ne X_0$ such that $X'$ satisfies (\ref{keyeqn}).
\end{lemma}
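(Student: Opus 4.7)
The plan is to bound, over the random shared secrets, the probability that (\ref{keyeqn}) admits any solution $X' \ne X_0$. By Corollary \ref{keyeqnproof} I may equivalently work with the three groups of equations (\ref{txmatch2})--(\ref{hashmatch2}). Set $\bm{\delta}_w = \bm{w}' - \bm{w}$ and $\bm{\delta}_{l_k} = \bm{l}'_k - \bm{l}_k$. The homogeneous form of (\ref{hashmatch2}) reads $D_k \bm{\delta}_w + \bm{\delta}_{l_k} = 0$, so $\bm{\delta}_{l_k} = -D_k\bm{\delta}_w$ for every $k$; in particular a nontrivial spurious solution must have $\bm{\delta}_w \ne 0$ (otherwise all $\bm{\delta}_{l_k}$ would also vanish and $X' = X_0$). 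This already rules out the ``$\bm{\delta}_w = 0$, $\bm{\delta}_l \ne 0$'' case deterministically, so only the case $\bm{\delta}_w \ne 0$ needs a probabilistic treatment.

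Next I count the network-feasible differences. Because $\hat{T}^{(i)}$ has full column rank (by the lemma just above), the homogeneous version of (\ref{txmatch}) uniquely determines $\bm{\delta}_{x_b}$ from $\bm{\delta}_{x_a}$, so the set of $\bm{\delta}_w$ consistent with (\ref{txmatch2}) has cardinality at most $q^{b(r_i - b)} \le q^{b i\bar{c}}$, using $r_i \le \sum_j M_j \le i\bar{c}$. An analogous argument using Lemma \ref{tfullcol} and (\ref{tlmatch}) bounds the set of $(\bm{\delta}_{l_1},\ldots,\bm{\delta}_{l_i})$ satisfying (\ref{tlmatch2}) by $q^{\sigma(\bar{r}_i - i\sigma)} \le q^{\sigma i \bar{c}}$. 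In total at most $q^{(b+\sigma)i\bar{c}}$ candidate pairs $(\bm{\delta}_w,\bm{\delta}_l)$ with $\bm{\delta}_w \ne 0$ survive the two network constraints.

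For a fixed such pair I invoke a Schwartz--Zippel computation modeled on Lemma \ref{hash}. The $i$-th coordinate of $D_k\bm{\delta}_w$ equals the univariate polynomial $\sum_{j=1}^{nb}(d_i^{(k)})^j \delta_{w,j}$ of degree at most $nb$ in the independent secret $d_i^{(k)}$; it is a nonzero polynomial because $\bm{\delta}_w \ne 0$, hence attains any prescribed value with probability at most $nb/q$. Because the $d_i^{(k)}$ are drawn independently across both $i$ and $k$, these events factor and give
\begin{equation*}
\Pr\bigl[\,D_k\bm{\delta}_w = -\bm{\delta}_{l_k}\ \forall k \le i\,\bigr] \;\le\; \left(\frac{nb}{q}\right)^{\sum_{k=1}^i \alpha_k} = \left(\frac{nb}{q}\right)^{i(i+1)\sigma m/2}.
\end{equation*}

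Taking the union bound over the at most $q^{(b+\sigma)i\bar{c}}$ candidate pairs, the probability that any spurious $X'$ exists is at most $q^{(b+\sigma)i\bar{c}}(nb/q)^{i(i+1)\sigma m/2}$. The hypothesis $\sigma m \ge 2b\bar{c} + 2\sigma\bar{c} + 1 = 2(b+\sigma)\bar{c} + 1$ makes the exponent of $q$ strictly negative (indeed at most $-i$), so this bound vanishes as $q\to\infty$. The main obstacle is getting the counting tight: one must use the already-established full-column-rank properties of $\hat{T}^{(i)}$ and $\hat{\bar{T}}^{(i)}$ to confine the network-feasible $\bm{\delta}_w$ and $\bm{\delta}_l$ to the low-dimensional subspaces of sizes $q^{b(r_i-b)}$ and $q^{\sigma(\bar{r}_i-i\sigma)}$ rather than to the full vectorized-message spaces; with these dimension bounds, the chosen threshold on $\sigma m$ is precisely what lets the Schwartz--Zippel decay outrun the counting exponent.
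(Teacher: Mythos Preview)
Your approach is essentially the paper's own proof: parametrise the network-feasible candidates by $(\bm{x}'_a,\bm{l}'_a)$ via the (homogeneous form of) $B_{top}$ and $B'_{mid}$ blocks, bound for each fixed candidate the probability that the $B_{bot}$ block is satisfied using the degree-$nb$ polynomial structure in the independent secrets, and finish with a union bound. The organisation via $\bm{\delta}_w$ and the one-line deterministic exclusion of the case $\bm{\delta}_w=0$ is a nice streamlining, but the core computation is identical.

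One counting slip: $L^{(i)}_a$ has $i\sigma$ rows, not $\sigma$, so the number of network-feasible $(\bm{\delta}_{l_1},\ldots,\bm{\delta}_{l_i})$ is at most $q^{\,i\sigma(\bar{r}_i-i\sigma)}\le q^{\,i^2\sigma\bar{c}}$, not $q^{\,\sigma(\bar{r}_i-i\sigma)}$. With the corrected count the union bound becomes $q^{\,ib\bar{c}+i^2\sigma\bar{c}}(nb/q)^{i(i+1)\sigma m/2}$, which is exactly the paper's expression; the hypothesis $\sigma m\ge 2b\bar{c}+2\sigma\bar{c}+1$ still drives the $q$-exponent to $-i^2 b\bar{c}-i\sigma\bar{c}-i(i+1)/2<0$, so your conclusion survives the fix.
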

\begin{proof}
   Suppose $X' \ne X_0$, and let $\bm{x}'_a$, $\bm{x}'_b$ be its vectorized versions as described in (\ref{cutx}). We consider the probability that there exist $\bm{x}'_a$, $\bm{x}'_b$, $\bm{l}'_a$ and $\bm{l}'_b$ that satisfy (\ref{keyeqn}). Let us first consider the top $\beta r_i + \gamma \bar{r}_i$ rows in $B$ that correspond to the blocks of $B_{top}$ and $B'_{mid}$
      \begin{align}\label{topkey}
      \left[ \begin{array}{cc}
        B_{top} & \bm{0}\\
        \bm{0} & B'_{mid}
       \end{array} \right] \left[ \begin{array}{c}
    \bm{x}'^{(i)}_a\\\bm{x}'^{(i)}_b\\\bm{l}'^{(i)}_a\\\bm{l}'^{(i)}_b
  \end{array} \right] = \left[ \begin{array}{c}
  \hat{\bm{T}}^{(i)} \bm{f}^X  \\ \hat{\bar{\bm{T}}}^{(i)} \bm{f}^A
  \end{array} \right],
   \end{align}
   They are equivalent to
   \begin{align}
     \label{x'}
     X_b'^{(i)} & =  F^{X} + X'^{(i)}_a F^Z\\\label{l'}
     L_b'^{(i)} & =  F^{A} + L'^{(i)}_a F^E
   \end{align}
   Therefore given arbitrary values of $\bm{x}_a'^{(i)}$ and $\bm{l}_a'^{(i)}$, there are unique corresponding values of $\bm{x}_b'^{(i)}$ and $\bm{l}_b'^{(i)}$ that satisfy (\ref{topkey}).

     Now given any $\bm{x}_a'^{(i)}$ and $\bm{l}_a'^{(i)}$ (and the corresponding $\bm{x}_b'^{(i)}$ and $\bm{l}_b'^{(i)}$) such that (\ref{topkey}) holds, we consider the probability that the bottom $\sum_{k=1}^i\alpha_k = (i^2+i)\sigma m/2$ rows in (\ref{keyeqn}) also holds:
  \begin{align}\label{x'hash}
     B_{bot} \left[ \begin{array}{c}
    \bm{x}'^{(i)}_a\\\bm{x}'^{(i)}_b\\\bm{l}'^{(i)}_a\\\bm{l}'^{(i)}_b
  \end{array} \right] & = \left[ \begin{array}{c}
   \bm{h_1} \\ \vdots \\ \bm{h_i}
  \end{array} \right]
  \end{align}
  This is equivalent to:
     \begin{align}\label{x'hashe}
       B_{bot} \left[ \begin{array}{c}
    \bm{x}^{(i)}_a - \bm{x}'^{(i)}_a\\\bm{x}^{(i)}_b - \bm{x}'^{(i)}_b\\\bm{l}^{(i)}_a - \bm{l}'^{(i)}_a\\\bm{l}^{(i)}_b - \bm{l}'^{(i)}_b
  \end{array} \right] = \bm{0},
   \end{align}
     Because $X' \ne X_0$, so $\bm{x}^{(i)}_a-\bm{x}'^{(i)}_a$ and $\bm{x}^{(i)}_b-\bm{x}'^{(i)}_b$ cannot both be the zero vector. Denote
      \begin{align*}
        \bm{x}^{(i)}_a-\bm{x}'^{(i)}_a & = (x^{(i)}_{a,1},...,x^{(i)}_{a,\theta_a})^T\\
        \bm{x}^{(i)}_b-\bm{x}'^{(i)}_b & = (x^{(i)}_{b,1},...,x^{(i)}_{b,\theta_b})^T\\
        \left[\begin{array}{c}
          \bm{l}^{(i)}_a-\bm{l}'^{(i)}_a\\
          \bm{l}^{(i)}_b-\bm{l}'^{(i)}_b
        \end{array}
         \right] & = (l^{(i)}_{1},...,l^{(i)}_{\theta_l})^T
      \end{align*}
      where $\theta_a = b(r_i -b)$, $\theta_b = \beta b$ and $\theta_l =(i^2+i)\sigma m /2$. Denote the $(u,v)$ entry of $B_{bot}$ as $s_{u,v}$, then the $j$-th row of (\ref{x'hashe}) is
       \begin{multline}\label{poly}
         \sum_{k=1}^{b(r_i -b)} x^{(i)}_{a,k} s_{j,k} + \sum_{k=1}^{\beta b} x^{(i)}_{b,k} s_{j,k+b(r_i -b)}\\ + \sum_{k=1}^{(i^2+i)\sigma m /2} l^{(i)}_k s_{j,k+nb} = 0
       \end{multline}
       Let $s_j$ be the $(j,1)$ entry of $B_{bot}$ before column permutation, then $s_{j,k}=s_j^{\pi(k)}$, $1\le k\le nb$, where $\pi$ is a permutation of $\{1,...,nb\}$. So (\ref{poly}) is a non-zero polynomial of order at most $b(r_i -b) + \beta b = nb$ in variable $s_j$ (the $\{s_{j,k+nb}\}$ are constants 0 or 1 by construction and are independent with respect to $s_j$). By the fundamental theorem of algebra the polynomial have at most $nb$ roots. And the probability that $s_j$ is chosen as one of the roots is at most $nb/q$, and this is the upper bound of the probability that row $j$ holds in (\ref{x'hashe}). Because $\{s_{j}\}$ are chosen independently, (\ref{x'hashe}) holds with probability no larger than $(nb/q)^{(i^2+i)\sigma m/2}$.

    Finally, there are at most $q^{b(r_i-b)}$ different $\bm{x}_a^{(i)}$ and at most $q^{i\sigma(\bar{r}_i-i\sigma)}$ different $\bm{l}_a^{(i)}$. By (\ref{Y_i2}), $r_i - b \le i\bar{c}$, and by (\ref{J_i2}), $\bar{r}_i - i\sigma \le i \bar{c}$. Therefore by the union bound, the probability that there exists $X'_0 \ne X_0$ such that $\bm{x}'_a$, $\bm{x}'_b$, $\bm{l}'_a$ and $\bm{l}'_b$ satisfy (\ref{keyeqn}) is at most
    \begin{align*}
      \left(\frac{nb}{q}\right)^{\frac{(i^2+i)\sigma m}{2}} q^{ib\bar{c}+i^2\sigma\bar{c}} \le \frac{(nb)^{i^2\sigma m}}{q^{i^2}} \to 0
    \end{align*}
\end{proof}

We are ready to present the final conclusion.
\begin{thm}\label{thmrs}
  $\forall i$ such that $b+\sum_{j=1}^{i} z_j \le \sum_{j=1}^{i} M_j$, with the proposed coding scheme, Bob is able to decode $X_0$ correctly with high probability at the $i$-th stage. Otherwise, Bob waits for more redundancy instead of decoding erroneous packets.
\end{thm}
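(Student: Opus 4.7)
The plan is to combine Corollary~\ref{keyeqnproof} (existence of a valid solution) with Lemma~\ref{noX'} (uniqueness of the solution) to conclude that Bob recovers $X_0$ correctly. This mirrors the structure of the proof of Theorem~\ref{thm1}, but now the analysis is carried out over the single public equation (\ref{keyeqn}) rather than separating network transform and hash equations.

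First, suppose the cut-set condition $b+\sum_{j=1}^{i} z_j \le \sum_{j=1}^{i} M_j$ holds. Then by Corollary~\ref{keyeqnproof}, the true message vectors $\bm{x}_a^{(i)},\bm{x}_b^{(i)},\bm{l}_a^{(i)},\bm{l}_b^{(i)}$ derived from $X_0$ satisfy (\ref{keyeqn}) with high probability, so the system is consistent. By the encoder's choice of parameters, we have $\sigma m \ge 2b\bar{c}+2\sigma\bar{c}+1$, so Lemma~\ref{noX'} applies and with high probability there is no $X'\neq X_0$ that satisfies (\ref{keyeqn}). Hence with high probability the solution of (\ref{keyeqn}) is unique; Bob extracts $X_0$ from $\bm{x}_a^{(i)},\bm{x}_b^{(i)},\bm{x}_c^{(i)}$ (recalling that $\bm{x}_c^{(i)}$ is determined by the identity suffix) and declares success.

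Next, consider any stage $i$ for which the cut-set condition has not yet been met. Lemma~\ref{noX'} still guarantees, independent of the cut-set condition, that no spurious $X'\neq X_0$ is consistent with (\ref{keyeqn}) with high probability. Therefore one of two things happens: either (\ref{keyeqn}) has no solution, in which case the decoder, by construction, outputs no estimate and waits for the next stage; or the unique solution is $X_0$ itself, in which case Bob decodes correctly even earlier than the sufficient condition guarantees. In neither case does Bob output an incorrect $X_0$ with non-vanishing probability.

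Finally, a union bound over the high-probability events invoked from Corollary~\ref{keyeqnproof} and Lemma~\ref{noX'} yields the claim, since each of the failure probabilities vanishes as $q\to\infty$. The main subtlety I anticipate is making explicit that Bob's wait/decode decision rule is well-defined in both regimes: specifically, that the absence of a solution is observable from (\ref{keyeqn}) alone, and that the uniqueness guarantee from Lemma~\ref{noX'} applies uniformly in $i$ (with failure probability summable or small for the relevant stages). Once these bookkeeping details are verified, the theorem follows directly from the two lemmas already established.
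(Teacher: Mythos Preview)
Your proposal is correct and follows essentially the same approach as the paper: invoke Corollary~\ref{keyeqnproof} for existence of a solution when the cut-set bound holds and Lemma~\ref{noX'} for uniqueness, concluding that otherwise Bob simply waits. Your treatment of the ``otherwise'' case is in fact slightly more careful than the paper's, but the argument is the same.
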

\begin{proof}
By Corollary \ref{keyeqnproof}, $X_0$ can be solved from (\ref{keyeqn}) if $b+\sum_{j=1}^{i} z_j \le \sum_{j=1}^{i} M_j$. By Lemma \ref{noX'}, if a solution exists, it is correct and unique. Otherwise, there is no solution to (\ref{keyeqn}) and by the algorithm Bob waits for more redundancy. 
\end{proof}

Similar to the case of the secret channel model, Theorem \ref{thmrs} shows that our code is optimal in that sense that decoding succeeds with high probability whenever the total amount of
information received by the sink satisfies the cut set bound with
respect to the amount of message and error information.  We can also show rate-optimality under the  i.i.d. case.
\begin{thm}
  Assume $M_i$, $z_i$, $i=1,2...$ are i.i.d. random variables with mean $\mathbb{E}[M]$ and $\mathbb{E}[z]$, respectively.
 If there exists $\epsilon >0$ such that $\mathbb{E}[M]-\mathbb{E}[z] \ge \epsilon$, then with the proposed coding scheme Bob is able to decode $X_0$ correctly with high probability. And on average the code achieves rate $$r \ge \frac{b}{b+\bar{c}-1} \left(\mathbb{E}[M] - \mathbb{E}[z]\right).$$
\end{thm}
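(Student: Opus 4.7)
The plan is to mimic the proof of Theorem \ref{rate} almost verbatim, substituting Theorem \ref{thmrs} for Theorem \ref{thm1} wherever the argument invokes the sufficiency of the cut-set condition $b+\sum_{j=1}^i z_j \le \sum_{j=1}^i M_j$ to guarantee successful decoding. Since the two cut-set conditions coincide and the stochastic model on $(M_i,z_i)$ is the same, essentially no new probabilistic estimate is needed, only bookkeeping.

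First I would define the stopping time
\[
N=\min\Bigl\{i: b+\sum_{j=1}^i z_j \le \sum_{j=1}^i M_j\Bigr\},
\]
and set $\Delta_j = M_j - z_j$, so $\mathbb{E}[\Delta_j]\ge\epsilon$ and $\mathrm{Var}[\Delta_j]=\sigma_\Delta^2<\infty$ (finiteness uses $M_j,z_j \le \bar c$). Choosing $L=b/\epsilon$ and applying Chebyshev's inequality exactly as in the proof of Theorem \ref{rate} yields $\Pr\{\sum_{j=1}^N \Delta_j \ge b\}\to 1$, so $N$ is finite with high probability. Invoking Theorem \ref{thmrs} at stage $N$ then shows that Bob decodes $X_0$ correctly with high probability at that stage.

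Next I would bound the expected rate. By Jensen's inequality,
\[
r=\mathbb{E}\!\left[\frac{b}{N}\right]\ge\frac{b}{\mathbb{E}[N]}.
\]
The sequence $S_\tau=\sum_{j=1}^\tau \Delta_j$ is a random walk with positive drift and $N$ is the hitting time of the set $\{S_\tau\ge b\}$, hence a stopping time adapted to $(\Delta_j)$. Wald's first identity gives
\[
\mathbb{E}[N](\mathbb{E}[M]-\mathbb{E}[z])=\mathbb{E}[S_N]\le b+\bar c-1,
\]
where the upper bound follows because at the stage immediately before $N$ the deficit was at most $b-1$ and a single stage contributes at most $\bar c$ to $S_\tau$ (since $M_j\le c_j\le \bar c$). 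Rearranging and substituting into the Jensen bound yields
\[
r\ge\frac{b}{b+\bar c-1}\bigl(\mathbb{E}[M]-\mathbb{E}[z]\bigr),
\]
as claimed.

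There is essentially no hard step: the whole argument is a direct port of Theorem \ref{rate}. The only subtlety worth checking is that Theorem \ref{thmrs} really applies at the random stage $N$ with high probability; this is immediate because the cut-set condition used there is the defining event of $N$, and the probability of decoding error conditioned on this event vanishes as $q\to\infty$ uniformly over the finitely many stages involved (Chebyshev gives $N$ concentrates, and we can first fix a large enough horizon, take $q\to\infty$, then let the horizon grow). Consequently the proof can be kept to a few lines by appealing directly to the proof of Theorem \ref{rate}.
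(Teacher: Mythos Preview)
Your proposal is correct and follows essentially the same approach as the paper: the paper's proof literally says ``the rest of the proof is identical to the proof of Theorem~\ref{rate}'' after invoking Theorem~\ref{thmrs}. The one point you omit, which the paper does spell out, is that in the random secret model each stage also transmits short packets of length $i(m+\sigma)$; since this overhead is independent of $n$ while long packets have length $n+b$, it is asymptotically negligible and does not affect the rate formula. You should add a sentence to that effect so that the rate bound genuinely applies to the total communication cost.
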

\begin{proof}
  Note that both long packets and short packets are sent over the network. We consider the short packets to be overhead. At the $i$-th stage, the length of a short packet is $i(m+\sigma)$, and is negligible as a large enough $n$ is chosen. The rest of the proof is identical to the proof of Theorem \ref{rate}.
\end{proof}

Again the proposed scheme is asymptotically rate optimal as we choose a large enough $b$. The computational cost of design, encoding, and decoding is dominated by the cost of solving (\ref{keyeqn}), which equals $O((ni\bar{c})^3)$.

\section{Conclusion}
This paper introduces information-theoretical rateless resilient network codes against Byzantine adversaries. Unlike previous works, knowledge about the network and adversaries are not required and the codes will adapt to their parameters by sending more redundancy over time if necessary. We present two algorithms
targeting two network models. The first model
assumes there is a low-rate secret channel between the source and the
destination. The
second model assumes the source and destination share some ``small'' random secrets that are
independent with the input information. For both models our codes are rate-optimal, distributed, polynomial-time, work on general topology, and only require source and destination nodes to be modified.





%

%
%

\bibliographystyle{IEEEtran}	
\bibliography{bare_conf}

\end{document}